\documentclass[10pt,english,journal]{IEEEtran}

\usepackage{iftex}
\ifXeTeX
  \usepackage{amsmath,amssymb,amsfonts,mathtools}
  \usepackage{amsthm}
  \usepackage{newtxmath}
  \usepackage{fontspec}
\else
  \usepackage[T1]{fontenc}
  \usepackage{amsmath,amssymb,amsfonts,mathtools}
  \usepackage{amsthm}
  \usepackage{newtxtext,newtxmath}
\fi

\usepackage{bm}
\usepackage{booktabs}
\usepackage{algorithm}
\usepackage[noend]{algpseudocode}
\usepackage{graphicx}
\usepackage{microtype}
\usepackage{url}
\usepackage{balance}
\usepackage{cite}
\usepackage{enumitem}
\usepackage{siunitx}
\usepackage{verbatim}
\usepackage{tabularx}   
\usepackage{array}      
\usepackage{adjustbox}
\usepackage{comment}
\usepackage{hyphenat}
\usepackage{xcolor}
\usepackage{tikz}
\usetikzlibrary{positioning,arrows.meta,fit,backgrounds}
\usepackage{placeins}
\usepackage[hidelinks]{hyperref}
\usepackage[nameinlink]{cleveref}   

\newcommand{\E}{\mathbb{E}}

\DeclareMathOperator*{\argmax}{arg\,max}

\newcommand{\VoI}{\mathrm{VoI}}

\allowdisplaybreaks 

\theoremstyle{plain}
\newtheorem{theorem}{Theorem}
\newtheorem{prop}{Proposition}

\theoremstyle{definition}

\theoremstyle{remark}

\makeatletter
\long\def\@makecaption#1#2{%
\ifx\@captype\@IEEEtablestring%
\footnotesize\bgroup\par\centering\@IEEEtabletopskipstrut{\normalfont\footnotesize #1}\\{\normalfont\footnotesize #2}\par\addvspace{0.5\baselineskip}\egroup%
\@IEEEtablecaptionsepspace
\else
\@IEEEfigurecaptionsepspace
\setbox\@tempboxa\hbox{\normalfont\footnotesize {#1.}\nobreakspace\nobreakspace #2}%
\ifdim \wd\@tempboxa >\hsize%
\setbox\@tempboxa\hbox{\normalfont\footnotesize {#1.}\nobreakspace\nobreakspace}%
\parbox[t]{\hsize}{\normalfont\footnotesize\noindent\unhbox\@tempboxa#2}%
\else%
\ifCLASSOPTIONconference \hbox to\hsize{\normalfont\footnotesize\hfil\box\@tempboxa\hfil}%
\else \hbox to\hsize{\normalfont\footnotesize\box\@tempboxa\hfil}%
\fi\fi\fi}
\makeatother

\graphicspath{{figures/}}
\usepackage{multirow}
\newcommand{\SlackHi}{95.8}
\newcommand{\SlackLo}{95.2}
\newcommand{\BindExactHi}{76}
\newcommand{\BindExactLo}{84}
\newcommand{\BindGapHi}{5.3}
\newcommand{\BindGapLo}{2.9}
\newcommand{\ZeroValue}{38}
\newcommand{\WithinOne}{21}
\newcommand{\WithinTwo}{35}
\newcommand{\ConfMinRate}{98.4}
\newcommand{\IntentExcHi}{1.6}
\newcommand{\IntentExcLo}{4.9}
\newcommand{\LLMAgreeLo}{24}
\newcommand{\LLMAgreeHi}{34}
\newcommand{\LLMGenieGap}{12.6 [4.9, 20.8]}
\newcommand{\LLMRatioRange}{3.2--3.5}
\newcommand{\LLMdTQwen}{13.8 [0.8, 26.9]}
\newcommand{\LLMdTLlama}{14.2 [0.8, 28.2]}
\newcommand{\LLMdTPlanner}{$-$3.7 [$-$14.1, 7.0]}
\newcommand{\LLMInterQwen}{17.5 [0.1, 34.6]}
\newcommand{\QSevenText}{With the stronger Qwen2.5-7B (30 seeds), the difference was 5.2 [$-$11.2, 21.5] slots, not significant.}

\begin{document}

\title{Resource-Efficient Semantic Communication\\ for Heterogeneous Agentic Teams}

\author{
Farhad~Rezazadeh,~\IEEEmembership{Member,~IEEE},~Hatim~Chergui,~\IEEEmembership{Senior~Member,~IEEE},~Lingjia~Liu,~\IEEEmembership{Fellow,~IEEE},\\~and~Merouane~Debbah,~\IEEEmembership{Fellow,~IEEE}
\thanks{F. Rezazadeh is with the Technical University of Catalonia (UPC) and BrainOmega, 08028 Barcelona, Spain (e-mail: farhad.rezazadeh@upc.edu).}
\thanks{H. Chergui is with i2CAT Foundation, 08034 Barcelona, Spain (e-mail: hatim.chergui@i2cat.net).}
\thanks{L. Liu is with the Virginia Tech, 24061 Blacksburg, USA (e-mail: ljliu@vt.edu).}
\thanks{M. Debbah is with the Khalifa University of Science and Technology, 127788 Abu Dhabi, UAE (e-mail: merouane.debbah@ku.ac.ae).}
}

\markboth{Manuscript submitted for review}{Resource-Efficient Semantic Communication for Heterogeneous Agentic Teams}

\maketitle
\raggedbottom
\thispagestyle{empty}
\pagestyle{empty}

\begin{abstract}
Teams of autonomous agents, including large language model (LLM) agents, must
coordinate over scarce and unreliable wireless links. We propose goal-oriented semantic
communication (GOSC)\footnote{To support
reproducibility, the source code is publicly available for non-commercial use
at \url{https://github.com/frezazadeh/gosc-agentic-teams}.}, a closed-loop co-design that jointly decides what each agent sends,
when it sends it, and how reliably it is transmitted, based on each message's value to
the team task. An edge broadcast of the team's common knowledge closes the loop by
updating these values. We prove that a message is sent only if its value exceeds the
cost of delivering it and that more valuable messages receive more robust transmission
rates, and we show that the scheduler solves each scheduling step exactly whenever the radio budget
is not saturated, which held in 95\% of scheduling decisions. In search-and-rescue
missions validated on unseen scenarios, GOSC meets the same mission targets as carefully
tuned periodic semantic schemes with 1.2--8.5 times fewer uplink channel uses. In most
settings, this advantage persists with realistic packet overheads, reaching 16.6 times
fewer uplink channel uses and 13.8 times lower cost when downlink costs are included; in
the rescue task, it also persists when all agents share one uplink. Rough value estimates suffice, whereas values that ignore
message content can fail. With three different LLMs, GOSC uses 3.2--3.5 times fewer
channel uses, while completion-time gains depend on the model.
\end{abstract}

\begin{IEEEkeywords}
Semantic communication, goal-oriented communication, multi-agent systems, LLM agents,
event-triggered scheduling, unequal error protection.
\end{IEEEkeywords}

\begin{figure*}[t]
\centering
\begin{tikzpicture}[font=\footnotesize, >={Stealth[length=5pt]}, node distance=6mm and 8mm,
  blk/.style={draw, rounded corners=2pt, align=center, minimum height=8mm, minimum width=26mm, inner sep=3pt},
  core/.style={blk, fill=blue!8, draw=blue!60!black},
  gosc/.style={blk, fill=orange!10, draw=orange!70!black},
  key/.style={blk, fill=orange!25, draw=orange!80!black, very thick},
  net/.style={blk, fill=gray!12},
  lab/.style={font=\scriptsize, fill=white, inner sep=1pt}]
\node[blk] (sense) {Perception\\(on-board detector)};
\node[blk, right=of sense] (belief) {Belief\\$\ell_k=\ell^E+\delta_k$};
\node[core, right=of belief] (cog) {Cognitive core\\planner \emph{or} LLM};
\node[blk, right=of cog] (int) {Intention $g_k$\\(target / claim)};
\draw[->] (sense) -- (belief);
\draw[->] (belief) -- (cog);
\draw[->] (cog) -- (int);
\node[gosc, below=9mm of belief] (enc) {\textbf{Q1 (what)} semantic\\encoding + relevance filter};
\node[gosc, right=of enc] (val) {Message value $V$\\(pluggable, e.g., ToM, bits, AoI)};
\node[key, right=of val] (sch) {\textbf{Q2 (when) \& Q3 (how reliably)}\\send iff $V>\lambda c^\star(\bar\gamma)$,\\rate non-increasing in $V$};
\draw[->] (belief) -- node[lab, right] {evidence} (enc);
\draw[->] (int.south) -- ++(0,-5mm) -| node[lab, pos=0.25, above] {intention} (val.north);
\draw[->] (enc) -- (val);
\draw[->] (val) -- node[lab, above] {$v$} (sch);
\begin{scope}[on background layer]
\node[draw=black!45, dashed, rounded corners, fit=(sense)(int)(enc)(sch), inner sep=3mm] (agent) {};
\end{scope}
\node[anchor=south west, font=\footnotesize\itshape] at (agent.north west) {Agent $k$};
\node[net, below=15mm of val, minimum width=50mm] (edge) {Base station + edge coordinator\\fusion of $\ell^E$, declarations, broadcast};
\draw[->] (sch.south) |- node[lab, pos=0.3, right, align=left] {fading uplink, $W$ ch.\ uses/slot,\\price $\eta$, rate $R\in\mathcal{R}$} (edge.east);
\draw[->] (edge.west) -| node[lab, pos=0.7, left, align=right] {common knowledge\\closes the loop} (enc.south);
\node[net, left=10mm of edge.west, anchor=east, yshift=-11mm] (mates) {Teammates\\(planner or LLM)};
\draw[<->] (mates.east) -| ([xshift=-6mm]edge.south);
\end{tikzpicture}
\caption{GOSC as a closed-loop co-design of what, when and how reliably. Each agent's
cognitive core (a planner or an LLM) produces beliefs and intentions; the semantic
representation and relevance filter determine the candidate messages (Q1); a pluggable
value function prices them; one value-priced selection decides whether each is sent now
(Q2) and at which rate (Q3) under the per-slot budget and the price $\lambda$
(Theorem~\ref{thm:structure}); the edge broadcast of the fused common knowledge tells every
sender what its teammates already know, closing the loop (Section~\ref{sec:gosc}). ToM,
theory of mind; AoI, age of information.}
\label{fig:arch}
\end{figure*}

\section{Introduction}
\IEEEPARstart{A}{utonomous} agents increasingly solve tasks as teams, such as fleets of
unmanned aerial vehicles (UAVs), robots, and embodied agents built on large language
models (LLMs) \cite{saad2020vision,letaief2019roadmap,park2023generative,zhang2024coela, rezazadeh2024genonet}.
Coordination requires agents to share what they observe and what they intend, but in
the situations where teams matter most, such as disaster response, the wireless
capacity available for coordination is scarce, fading and shared with other traffic.
Classical communication design reproduces bits reliably irrespective of their use
\cite{shannon1949}; semantic and goal-oriented communication instead asks which
information serves the receiver's task
\cite{strinati2021beyond,gunduz2023beyond,kountouris2021semantics,chaccour2025less}.

For a team, this question becomes three decisions that every agent faces in every
slot.
\begin{enumerate}[label=\textbf{Q\arabic*}),leftmargin=*,itemsep=0pt]
\item \emph{What} should it send? Raw sensor data, verbose natural language as
exchanged by LLM agents \cite{wu2023autogen,li2023camel,hong2024metagpt}, or compact
semantic updates?
\item \emph{When} should it send? Periodically, or only when a message is worth its
radio cost?
\item \emph{How reliably} should the network deliver it over a fading link?
\end{enumerate}
Existing designs answer these questions largely one at a time. Semantic and
task-oriented communication chooses the representation (Q1)
\cite{xie2021deepsc,shao2022ib,sun2025importance}, age- and value-of-information
scheduling chooses the timing (Q2) \cite{kaul2012aoi,holm2023goal,wu2024waypoint}, and
importance-aware unequal error protection chooses the reliability (Q3)
\cite{cao2025mimo,kim2025uep}. In multi-agent coordination the three are coupled.
Whether a message is worth sending depends on what the teammates already know, which
changes with every delivered message; whether it is worth its cost depends on the
channel uses needed to deliver it reliably; and how much protection it deserves
depends on how much it is worth.

This paper co-designs the three decisions in closed loop under a task-level objective.
In the proposed goal-oriented semantic communication (GOSC) framework
(Figure~\ref{fig:arch}), a lossless semantic representation determines \emph{what} is
sent; a value-versus-cost trigger determines \emph{when}, sending a message only if its
value to the team exceeds the price of delivering it; and joint value-aware rate
selection determines \emph{how reliably}, protecting more valuable messages more
strongly. The edge coordinator's broadcast of the fused common knowledge closes the
loop; it tells every sender what its teammates already know, and hence what is still
worth saying.

A design built on message values raises a question that has received little attention.
\emph{How accurate does semantic valuation need to be for resource-efficient agent
communication?} We answer it analytically and experimentally. A value error can change
a decision only for messages whose value lies close to the price of their delivery, and
the loss it causes is bounded by the price of the channel uses whose allocation changes,
not by the value itself. In missions, GOSC kept a significant channel-use advantage over tuned periodic schedules in 20 of 24 combinations of setting and value estimate whose rank correlation with the exact value was at least 0.89, including typical errors of a factor of seven and values known only to their order of magnitude; only content-independent values lost the advantage in the search task. Precise semantic valuation is
therefore not necessary; coarse task relevance combined with communication-aware
scheduling suffices, which also explains why a theory-of-mind value showed no
advantage over simpler values.

Our contributions are as follows.
\begin{itemize}
\item \textbf{Closed-loop co-design of what, when and how reliably.} We formulate team
coordination over a block-fading uplink as the joint minimization of mission time and
radio cost and propose GOSC, in which one decision rule, expected task value minus the
price of the channel uses, selects the messages, triggers their transmission and sets
their rates, and the common-knowledge broadcast closes the loop. The communication
layer is training-free and independent of the agents' cognitive core, which can be a
planner or an LLM.
\item \textbf{Structure and exactness of the scheduler.} A message is sent if and only if
its value exceeds the price of a successful delivery; at this margin it uses the
goodput-maximizing rate, and the more its value exceeds the margin, the more robust its
rate (Theorem~\ref{thm:structure}). The Lagrangian relaxation of the per-slot scheduling
problem is exact whenever the price keeps the budget slack and otherwise loses at most
one message's net value (Theorem~\ref{thm:exact}). The price kept the budget slack in
\SlackLo--\SlackHi\% of the logged instances, which explains why the heuristic was
exact in 99\% of them.
\item \textbf{Resource efficiency against tuned semantic baselines.} At operating points
frozen and re-evaluated on fresh seeds, GOSC meets common mission targets with
1.2--8.5 times fewer uplink channel uses than the most frugal tuned semantic periodic
schedules we could construct, including ones that filter or deliberately suppress
updates. The advantage survives per-packet overheads of 16--64 bits in five of six settings, growing with the overhead in the rescue task; it holds for the total uplink-plus-downlink cost including the common-knowledge broadcast; and it persists in the rescue task when all agents compete for a shared network-wide uplink, whereas in the search task it is significant only with the largest shared budget.
\item \textbf{How accurate must semantic values be?} We bound the effect of value errors
(Proposition~\ref{prop:robust}) and show in missions that GOSC's efficiency survives
value errors of a typical factor of seven but not content-independent values; semantic
valuation must be coarsely right, not precise, and a decision-aware theory-of-mind value
brings no consistent gain over throughput or age-of-information values.
\item \textbf{Reproducible evaluation.} Two tasks, finite-blocklength errors, a lossy and
delayed downlink, per-packet overheads, a shared uplink, a proof-of-concept with
heterogeneous LLM--planner teams (Qwen2.5-3B, Llama-3.2-3B and Qwen2.5-7B), paired
statistics, and hyper-parameters frozen on seeds disjoint from the evaluation seeds;
the code, configurations and per-mission results will be released publicly.
\end{itemize}

\section{Related Work and Positioning}\label{sec:related}
\subsubsection*{What to send}
Deep joint source--channel coding \cite{bourtsoulatze2019deep} and DeepSC
\cite{xie2021deepsc} transmit meaning rather than bits; task-oriented designs optimize
for a receiver's inference task \cite{shao2022ib,xie2022taskmulti}, and surveys identify
multi-agent collaboration as a key application
\cite{gunduz2023beyond,strinati2021beyond,kountouris2021semantics,chaccour2025less,uysal2022semantic}.
Importance-aware designs select the most important learned features under a performance
target \cite{sun2025importance}; with several transmitters, feature dimensions, their
precision and resource blocks can be chosen jointly under a per-slot budget
\cite{lei2026icotasc}. Here the importance is computed offline, or per input, by a
trained model for the inference task of a single receiver.

\subsubsection*{When to send}
Age of information (AoI) and its variants schedule updates by freshness
\cite{kaul2012aoi,kadota2018scheduling,maatouk2020aoii}; query AoI
\cite{chiariotti2022qaoi} and application-aware value of information \cite{holm2023goal}
schedule sensor updates by when and how receivers use them. Event-triggered sampling
sends only when the change of a signal justifies it \cite{astrom2002riemann}; in
networked control, the value of a packet is its effect on the control cost
\cite{soleymani2022voi}, and goal-oriented robotic waypoint transmission combines value
and age of information with proactive repetition \cite{wu2024waypoint}. Theory of mind
(ToM), attributing beliefs and intentions to others \cite{premack1978tom,baker2017tom},
lets ToM2C \cite{wang2022tom2c} learn when and with whom to share intentions, and
Thomas \emph{et al.} \cite{thomas2023pragmatic} build a pragmatic semantic link between
two agents; learned multi-agent communication
\cite{foerster2016learning,sukhbaatar2016commnet,das2019tarmac,kim2019schednet,tung2021effective}
requires end-to-end training.

\subsubsection*{How reliably to deliver}
Importance-aware unequal error protection allocates more power to important features
\cite{cao2025mimo} or protects them with stronger channel codes to meet learned
reliability targets \cite{kim2025uep}; rate--distortion optimized streaming schedules
interdependent media packets \cite{chou2006rd}. These designs fix what is sent and
decide its protection.

\subsubsection*{Positioning}
Table~\ref{tab:position} organizes the closest works by the three decisions and by the
loop that couples them; it compares problem settings and mechanisms.
Each decision has strong precedents, namely importance-aware feature selection for \emph{what}
\cite{sun2025importance,lei2026icotasc}, ToM-based and value- or age-based gating for
\emph{when} \cite{wang2022tom2c,wu2024waypoint}, and importance-aware power, coding and
precision for \emph{how reliably} \cite{cao2025mimo,kim2025uep,lei2026icotasc}. What is
new in GOSC is their closed-loop co-design for a team with a task-level objective. One
decision rule chooses the message, whether to send it now, and its rate, by one
criterion, namely the expected value to the team minus the price of reliable delivery. The value is
computed against the teammates' current knowledge, which the edge broadcast keeps
common, so every delivery changes what is worth sending next. The co-design has a
structure that none of the separate designs has, with a send threshold set by the price of a
successful delivery, a rate that moves from goodput-optimal to most robust as the value
grows (Theorem~\ref{thm:structure}), and a scheduler that is exact whenever the price
keeps the budget slack (Theorem~\ref{thm:exact}). It also answers a question the
separate designs do not raise, namely how accurate the value must be (Section~\ref{sec:acc}).
LLM-based agents coordinate through natural language
\cite{park2023generative,li2023camel,wu2023autogen,hong2024metagpt,zhang2024coela}; GOSC
lets them coordinate through typed semantics at a fraction of the cost. Because the
published systems differ in task, receivers and channel model, we evaluate their
\emph{mechanisms} (ToM-based valuation, age-of-information prioritization, and a
throughput value) within one architecture, against periodic semantic schedules tuned
to be as frugal as possible (Section~\ref{sec:results}).

\begin{table*}[t]
\centering
\caption{Positioning by the three coupled decisions (what, when, how reliably) and the
loop that GOSC closes; problem setting and mechanism (\checkmark,
addressed; (\checkmark), partly; ---, not addressed or not modelled; Rx, receiver;
ToM, theory of mind; VoI/AoI, value/age of information; LoS, line of sight; AWGN, additive
white Gaussian noise; MIMO, multiple-input multiple-output; FBL, finite blocklength)}
\label{tab:position}
\setlength{\tabcolsep}{3.5pt}
\resizebox{\textwidth}{!}{%
\begin{tabular}{@{}lcccccccc@{}}
\toprule
 & ToM2C & Thomas \emph{et al.} & Wu \emph{et al.} & Sun \emph{et al.} & Cao \emph{et al.} & Kim \emph{et al.} & Lei \emph{et al.} & \\
 & \cite{wang2022tom2c} & \cite{thomas2023pragmatic} & \cite{wu2024waypoint} & \cite{sun2025importance} & \cite{cao2025mimo} & \cite{kim2025uep} & \cite{lei2026icotasc} & \textbf{GOSC}\\
\midrule
\multicolumn{9}{@{}l}{\emph{Q1 What (semantic message selection)}}\\
\quad Content chosen by task relevance & whom to tell & learned content & --- & top features & --- & --- & feature dimensions & messages, lossless records\\
\multicolumn{9}{@{}l}{\emph{Q2 When (event triggering)}}\\
\quad Sends only when worth its cost & learned gate & --- & VoI/AoI priority & --- & --- & --- & --- & value $>$ price of delivery (Thm.~\ref{thm:structure})\\
\multicolumn{9}{@{}l}{\emph{Q3 How reliably (physical-layer protection)}}\\
\quad Per-message reliability & --- & learned codec & equal repetition & --- & power & code rate & precision & rate\\
\quad Protection grows with value & --- & --- & --- & --- & \checkmark{} (closed form) & \checkmark{} (learned targets) & --- & \checkmark{} (Thm.~\ref{thm:structure})\\
\multicolumn{9}{@{}l}{\emph{Co-design and closed loop}}\\
\quad Q1--Q3 decided jointly & --- & (\checkmark) Q1+Q3 & --- & --- & --- & --- & (\checkmark) Q1+Q3 & \checkmark{} (one rule)\\
\quad Value tracks receivers' current knowledge & \checkmark & \checkmark & \checkmark & --- & --- & --- & --- & \checkmark{} (common knowledge)\\
\quad Several receivers that act on messages & \checkmark & --- (2 agents) & --- & --- & --- & --- & --- (1 fusion Rx) & \checkmark\\
\quad Exactness or structure of the decision & --- & --- & --- & --- & closed form & optimality analysis & --- & Thms.~\ref{thm:structure}, \ref{thm:exact}; Prop.~\ref{prop:robust}\\
\multicolumn{9}{@{}l}{\emph{Network and agents}}\\
\quad Resource budget & --- & --- & --- & --- & power & --- & resource blocks & per-agent or shared channel uses\\
\quad Channel and packet errors & ideal & Rayleigh & Rayleigh, LoS & AWGN, Rayleigh & MIMO Rayleigh & Rayleigh & no error model & Rayleigh, FBL, overhead\\
\quad Training-free communication layer & --- & --- & (\checkmark) ordering & (\checkmark) selector & (\checkmark) allocation & (\checkmark) coding & (\checkmark) scheduler & \checkmark\\
\quad Heterogeneous LLM--planner agents & --- & --- & --- & --- & --- & --- & --- & \checkmark\\
\bottomrule
\end{tabular}}
\end{table*}

\section{System Model and Problem Formulation}\label{sec:model}

\subsection{Tasks, Perception and Beliefs}
A disaster area is a grid of $N\times N$ cells with $V$ survivors at unknown cells. $K$
agents launch next to a base station (BS) with an edge coordinator. In slot $t$, agent
$k$ at $p_k(t)$ senses the $(2r+1)^2$ footprint $\mathcal{F}(p_k)$ with detection and
false-alarm probabilities $p_d$ and $p_{fa}$, independently per cell. The evidence
record $o_k^t$ adds the log-likelihood ratio (LLR) $\lambda^+=\ln(p_d/p_{fa})$ to
detected and $\lambda^-=\ln((1-p_d)/(1-p_{fa}))$ to undetected footprint cells of a
log-odds belief map; Bayesian fusion is additive. A survivor is \emph{declared} when
the edge posterior reaches $\tau$ or an agent whose own posterior crossed $\tau$
confirms it.

\emph{Search task.} The mission ends when all survivors are declared; the metric is
the completion time $T$ (capped at $T_{\max}$). \emph{Rescue task.} Each survivor also
has a deadline $d_s\sim\mathcal{U}[60,180]$ slots; once located, one agent must stay at
it for $S_r=8$ slots before $d_s$, and the metric is the fraction of survivors saved.
An agent's intention to rescue a survivor is a \emph{claim}; unknown claims cause
duplicated travel or neglected survivors.

\subsection{Agents}
Each agent follows a belief--desire--intention architecture \cite{rao1995bdi}. Its
\emph{planner core} selects the target
\begin{equation}\label{eq:score}
g_k=\argmax_{c} S_k(c),\;\; S_k(c)=\frac{M_k(c)}{1+\beta\|c-p_k\|_1}\prod_{q\in\mathcal{D}_k}\rho(c,q),
\end{equation}
where $M_k(c)$ is the belief mass in the footprint centred on $c$, $\beta$ discounts
travel, and $\rho(c,q)=1-e^{-\|c-q\|^2/2\varsigma^2}$ repels from the known intentions of
higher-priority teammates and the known positions of the others ($\mathcal{D}_k$); a
hysteresis factor $h$ avoids oscillations. In the rescue task, an agent takes the most
urgent feasible task that is not known to be claimed and for which no free teammate is
known to be closer, and keeps it while it remains feasible. Agents move one cell per
slot. In Section~\ref{sec:llm}, an LLM core replaces the planner core for some agents;
the communication layer only uses beliefs and intentions (Figure~\ref{fig:arch}).

\subsection{Radio Resources, Uplink and Downlink}\label{sec:radio}
\emph{Resources.} By default, each agent has $W$ channel uses per 1\,s slot on orthogonal
resources. In 5G New Radio (NR) numerology 0, a resource block over a 1\,ms slot holds 168 resource
elements \cite{3gpp38211}, so the default $W=100$ is about 0.6 resource-block slots per
agent and second, that is, a thin coordination sub-channel in a cell loaded by other traffic. In
Section~\ref{sec:shared}, all agents instead compete for a shared budget of $B$ channel
uses per slot, allocated by the BS in every slot (Section~\ref{sec:sharedsched}).

\emph{Uplink.} The mean signal-to-noise ratio (SNR) follows log-distance path loss,
$\bar\gamma_k=\bar\gamma_{\mathrm{ref}}(d_{\mathrm{ref}}/\max(d_k,d_{\min}))^{\alpha}$, and the
instantaneous SNR is $\bar\gamma_k|h_k|^2$ with Rayleigh block fading per slot; the
transmitter knows only $\bar\gamma_k$. A packet of $L$ bits at rate $R$ occupies
$s=\lceil (L+H)/R\rceil$ channel uses, where $H$ is a fixed per-packet overhead for the
header and the cyclic redundancy check (CRC), $H=0$ unless stated otherwise, and is
decoded iff
$\log_2(1+\bar\gamma_k|h_k|^2)\ge R$, so \cite{tse2005fundamentals}
\begin{equation}\label{eq:ps}
P_s(R;\bar\gamma_k)=\exp\!\big(-(2^R-1)/\bar\gamma_k\big).
\end{equation}
Packets of one agent in one slot share the fading block, so their outcomes are nested
in the rate. We also evaluate a finite-blocklength (FBL) model with the
normal-approximation error probability \cite{polyanskiy2010channel,durisi2016toward},
averaged over the fading, with an optional coding gap. Rates come from a finite set
$\mathcal{R}$. \emph{Conventional link adaptation} uses the largest $R\in\mathcal{R}$
with $1-P_s(R)\le0.1$; if none qualifies (below about 2.5\,dB mean SNR) it uses the
lowest rate, and packets larger than a slot are split into independently coded
per-slot fragments that are retransmitted until decoded. Hybrid automatic repeat request
(HARQ) acknowledgments are reliable.

\emph{Downlink.} The edge fuses decoded packets and broadcasts the result; by default
the broadcast is error-free, and Section~\ref{sec:robust} studies loss and delay.

\emph{Communication cost.} Our primary metric is the number of uplink channel uses,
including overheads. Section~\ref{sec:overhead} also reports the total cost, which comprises uplink data
and control (scheduling requests), the downlink broadcast of the common knowledge (its
payload plus $H$ per broadcast), one bit of HARQ feedback per uplink transmission, and
downlink grants, with downlink bits counted at 2\,bit per channel use.

\subsection{Shared and Private Knowledge}
Let $\ell^E$ be the edge log-odds map.
\begin{prop}[Knowledge decomposition]\label{prop:decomp}
With conditionally independent detections, exactly-once fusion at the edge and an
error-free, instantaneous broadcast, agent $k$'s belief is $\ell_k=\ell^E+\delta_k$,
where $\delta_k$ sums the LLRs of $k$'s own undelivered records.
\end{prop}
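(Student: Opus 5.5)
The plan is to prove the decomposition cell by cell, using the additivity of Bayesian fusion in log-odds form together with a bookkeeping argument over which evidence records have reached the edge. Fix a cell $c$ and a slot $t$, and let $\ell_0(c)$ be the common prior log-odds. By conditional independence of detections given the survivor indicator of $c$, the posterior log-odds after any collection of records is $\ell_0(c)$ plus the sum of the per-record LLRs ($\lambda^+$ or $\lambda^-$ for each footprint observation). Both sides of the claimed identity are therefore sums of LLRs over sets of records. It suffices to show that the set of records reflected in $\ell_k$ is the disjoint union of the set reflected in $\ell^E$ and the set of $k$'s own undelivered records, with each record counted once.

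The first step is to fix the information sets. Let $\mathcal{O}^E_t$ be the records decoded at the edge up to slot $t$; by exactly-once fusion, $\ell^E=\ell_0+\sum_{o\in\mathcal{O}^E_t}\mathrm{LLR}(o)$. Agent $k$ knows its own records $\mathcal{O}_k$ and whatever it has received by broadcast. Because the broadcast is error-free and instantaneous, every agent holds $\mathcal{O}^E_t$ in every slot. Agent $k$'s information is then $\mathcal{O}^E_t\cup\mathcal{O}_k$. Agent $k$ has no access to other agents' undelivered records, since agents exchange information only through the edge, as in the architecture of Figure~\ref{fig:arch}.

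The second step handles double counting, and I expect it to be the main obstacle. The union $\mathcal{O}^E_t\cup\mathcal{O}_k$ overlaps in $k$'s delivered records. I would argue that these delivered records are exactly $\mathcal{O}_k\cap\mathcal{O}^E_t$, which is well defined because exactly-once fusion tags each record and HARQ acknowledgments are reliable, so $k$ knows which of its own records were decoded. The agent's correct Bayesian belief is therefore $\ell_0+\sum_{o\in\mathcal{O}^E_t\cup\mathcal{O}_k}\mathrm{LLR}(o)$. Splitting this union disjointly as $\mathcal{O}^E_t\,\dot\cup\,(\mathcal{O}_k\setminus\mathcal{O}^E_t)$ gives $\ell_k=\ell^E+\delta_k$ with $\delta_k=\sum_{o\in\mathcal{O}_k\setminus\mathcal{O}^E_t}\mathrm{LLR}(o)$, which is precisely the sum of $k$'s undelivered records. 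The validity of summing over a union of records, rather than multiplying likelihoods blindly, rests on each physical observation appearing once. I would make that explicit by noting that records are indexed by (agent, slot, cell) triples.

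Finally, I would close by induction on $t$ to show that the invariant is maintained by the dynamics. In each slot, $k$ adds its new record to $\delta_k$. Any record decoded in that slot moves from $\delta_k$ to $\ell^E$, and since the broadcast then makes it part of $\ell^E$ at every agent, $\ell_k$ is unchanged by the move. The base case is $\ell_k=\ell^E=\ell_0$ and $\delta_k=0$. The inductive step is exactly this transfer, so the identity holds in every slot.
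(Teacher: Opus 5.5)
Your proposal is correct and is essentially the paper's argument. In both, $\ell_k$ is the prior plus the LLRs of all records known to $k$ (its own plus, via the broadcast, all delivered ones) and $\ell^E$ is the prior plus the delivered records, so splitting $\mathcal{O}^E_t\cup\mathcal{O}_k$ into the disjoint sets $\mathcal{O}^E_t$ and $\mathcal{O}_k\setminus\mathcal{O}^E_t$ leaves exactly $k$'s undelivered records, because LLRs add. Your closing induction is redundant given this direct set argument, and its inductive step should also note that teammates' records delivered in the slot enter $\ell^E$ and $\ell_k$ together, but it does no harm.
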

\begin{proof}
$\ell_k$ is the prior plus the LLRs of all records known to $k$, namely its own and, via the
broadcast, all delivered records of the others. $\ell^E$ is the prior plus all
delivered records; the difference is $k$'s undelivered records, and LLRs add.
\end{proof}
A sender therefore knows the shared component of its teammates' knowledge but not
their private components $\delta_j$; the broadcast is what closes the loop between the
team's knowledge and each sender's decisions. With a delayed or lossy broadcast, $\ell^E$
is replaced by the snapshot last received by agent $k$ (slot $v_k$), and $\delta_k$
comprises $k$'s own evidence \emph{not represented in that snapshot}, that is, records not yet
delivered plus records delivered after slot $v_k$. Our simulator implements exactly this
bookkeeping; delivered messages are never re-sent, because HARQ acknowledges them.

\subsection{Problem}
With $n_k^t$ the channel uses of agent $k$ in slot $t$ and a price $\eta\ge0$ per channel
use, we seek decentralized policies for
\begin{equation}\label{eq:problem}
\min\ \E\Big[T+\eta\sum_{t<T}\sum_{k}n_k^t\Big]\quad\text{s.t.}\quad n_k^t\le W\ \ \Big(\text{or}\ \textstyle\sum_k n_k^t\le B\Big),
\end{equation}
a decentralized partially observable Markov decision process (POMDP) that is
NEXP-complete in general \cite{bernstein2002complexity}.
In the rescue task, $T$ in \eqref{eq:problem} is replaced by the number of survivors not
saved. We do not solve \eqref{eq:problem}; each agent uses a myopic value-versus-cost
surrogate, whose structure we analyze in Section~\ref{sec:gosc} and whose merit we
evaluate empirically in Section~\ref{sec:results}.

\section{What, When and How Reliably in GOSC}\label{sec:gosc}
Figure~\ref{fig:arch} shows GOSC at each agent. The semantic representation fixes the
candidate messages (\emph{what}, Section~\ref{sec:enc}); a pluggable value function prices
them (Section~\ref{sec:val}); and one value-priced selection decides \emph{when} each is
sent and \emph{how reliably} (Section~\ref{sec:sel}). We then characterize this selection
(Sections~\ref{sec:sel}--\ref{sec:acc}) and extend it to a shared uplink
(Section~\ref{sec:sharedsched}).

\subsection{Semantic Representation for What to Send}\label{sec:enc}
GOSC sends \textsc{Confirm} (survivor cell, 18 bits), \textsc{Intent} (target cell, 18
bits) and \textsc{Evidence} messages. Evidence aggregates a run of $n$ pending records
losslessly using an anchor cell, 3-bit move codes (or a jump with $c_b+8$ bits), and, for
each detection, its record index and footprint offset, i.e.,
$L_E=34+3(n-1)+J(c_b+8)+P(\lceil\log_2 n\rceil+f_b)$ bits for $J$ jumps and $P$
detections, with $c_b=\lceil\log_2N^2\rceil$ and $f_b=\lceil\log_2(2r+1)^2\rceil$
(Table~\ref{tab:msgs}). A \emph{relevance filter} queues a record only if it would move
the belief held by the team after the queued evidence by at least $\varepsilon$ expected
survivors. The representation and the filter decide which information is a candidate at
all; which candidate is worth its radio cost is decided next.

\begin{table}[t]
\centering
\caption{Message types and payload sizes ($N=32$, $r=2$), without the per-packet
overhead $H$}
\label{tab:msgs}
\begin{tabular}{@{}llr@{}}
\toprule
Scheme & Content per message & Bits\\
\midrule
Raw & position + 8-bit score of each of 25 cells & 218\\
Natural language & English status report & $\approx$1{,}480\\
Semantic periodic & position, detections, intention & $33+5P$\\
\midrule
GOSC \textsc{Confirm} & survivor cell & 18\\
GOSC \textsc{Intent} & target cell & 18\\
GOSC \textsc{Evidence} & $n$ aggregated records, Sec.~\ref{sec:enc} & $34$ ($n{=}1$)\\
\bottomrule
\end{tabular}
\end{table}

\subsection{Message Values}\label{sec:val}
GOSC admits any value function; we consider four. (i) The \emph{ToM} value. The sender models each teammate $j$ as a Boltzmann-rational planner applying
\eqref{eq:score} to the shared knowledge $\mathcal{K}^E$,
$\pi_j(c)\propto\exp(S_j(c;\mathcal{K}^E)/(\vartheta\max S_j))$, and values a message $m$ by its
value of information (VoI), the change of the teammates' expected utility $U_j$ under its own, better-informed belief,
\begin{equation}\label{eq:voi}
\VoI_k(m)=\textstyle\sum_{j\neq k}\big(\E_{\pi_j(\cdot\mid\mathcal{K}^E\oplus m)}[U_j]-\E_{\pi_j(\cdot\mid\mathcal{K}^E)}[U_j]\big).
\end{equation}
In the rescue task, a claim receives a high value if some teammate would otherwise
choose the same survivor. (ii) \emph{Belief divergence} is the total-variation shift of
the shared belief. (iii) \emph{Throughput} is the payload size. (iv) \emph{Age of
information} is the summed age of the delivered records. Confirmations always have a
fixed high value. Only (i) models the receivers' decisions; (iii) and (iv) encode no task
knowledge beyond the relevance filter.

\subsection{Value-Priced Joint Selection for When and How Reliably}\label{sec:sel}
Each slot, the sender forms one \emph{group} $i$ per candidate message (each
confirmation, the intent if it changed, and the evidence). An option $o$ of group $i$
is a rate $R_{io}\in\mathcal{R}$ with the payload that fits into $W$ channel uses at that
rate (for evidence, the longest fitting prefix of the queue), occupying
$s_{io}=\lceil(L_{io}+H)/R_{io}\rceil$ channel uses, with expected value
$v_{io}=V_{io}P_s(R_{io};\bar\gamma)$, where $V_{io}$ is the value of its payload. The
per-slot problem is the multiple-choice knapsack \cite{kellerer2004knapsack}
\begin{equation}\label{eq:mck}
\max_x\textstyle\sum_{i,o}x_{io}\,u_{io}(\eta)\ \ \text{s.t.}\ \sum_{i,o}x_{io}s_{io}\le W,\ \sum_o x_{io}\le1,
\end{equation}
with $x_{io}\in\{0,1\}$ and net value $u_{io}(\lambda)=v_{io}-\lambda s_{io}$. For a price
$\lambda\ge\eta$, the \emph{decoupled rule} $x(\lambda)$ selects in each group the option
$o_i(\lambda)=\argmax_o u_{io}(\lambda)$ if this net value is positive and nothing
otherwise; $S(\lambda)$ denotes its channel uses. GOSC applies $x(\eta)$ if
$S(\eta)\le W$; otherwise it finds by bisection \cite{boyd2004convex} the smallest
$\lambda$ with $S(\lambda)\le W$, i.e., it relaxes the budget with a multiplier
$\mu=\lambda-\eta$, and fills the residual budget greedily (Algorithm~\ref{alg:gosc}). The
same rule answers \emph{when} (whether a group is selected) and \emph{how reliably} (which
rate), as the following result makes precise.

\begin{theorem}[Threshold structure for when and how reliably]\label{thm:structure}
Consider a message with a fixed payload $L$ and value $V\ge0$, a price $\lambda>0$ and
mean SNR $\bar\gamma$. Let $c(R;\bar\gamma)=\lceil(L+H)/R\rceil/P_s(R;\bar\gamma)$ be the
expected channel uses per successful delivery at rate $R$ (with independent
retransmissions), $c^\star(\bar\gamma)=\min_{R}c(R;\bar\gamma)$ over the rates whose packet
fits into the budget, and $R_c$ the largest minimizer. Then the following hold.
\begin{enumerate}[label=(\roman*),leftmargin=*,itemsep=0pt]
\item \emph{When.} The message is sent iff $V>\lambda c^\star(\bar\gamma)$, and
$c^\star(\bar\gamma)$ is non-increasing in $\bar\gamma$; hence neither a larger value nor a
better channel ever suppresses a transmission.
\item \emph{How reliably.} The selected rate $R^\star(V)$ (the largest maximizer of
$u(R)=VP_s(R;\bar\gamma)-\lambda\lceil(L+H)/R\rceil$) is non-increasing in $V$. For values
just above the threshold it minimizes $c$, and for sufficiently large values it is the
most robust feasible rate. Every transmitted message therefore uses a rate at or below
$R_c$; if $(L+H)/R$ is an integer for all $R$, the minimizers of $c$ maximize the goodput
$R\,P_s(R;\bar\gamma)$.
\end{enumerate}
\end{theorem}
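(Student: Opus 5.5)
The plan is to treat the single-message problem as maximizing $u(R)=VP_s(R)-\lambda s(R)$ over the finite feasible rate set, where $s(R)=\lceil (L+H)/R\rceil$. We compare this with the outside option $0$ of not sending. For part (i), note that $u(R)>0$ iff $V>\lambda s(R)/P_s(R)=\lambda c(R;\bar\gamma)$, because $P_s>0$. Hence some rate has positive net value iff $V>\lambda\min_R c(R)=\lambda c^\star$. This is exactly the send condition of the decoupled rule, which selects a group iff its best net value is positive. For monotonicity in $\bar\gamma$, we use \eqref{eq:ps}: for each fixed $R$, $P_s(R;\bar\gamma)=\exp(-(2^R-1)/\bar\gamma)$ is increasing in $\bar\gamma$. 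So each $c(R;\bar\gamma)$ is non-increasing in $\bar\gamma$, and the feasible set does not depend on $\bar\gamma$. A pointwise minimum of non-increasing functions is non-increasing. The threshold $\lambda c^\star$ therefore decreases as $\bar\gamma$ grows and does not depend on $V$, which gives the ``never suppresses'' claim.

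For part (ii), the key tool is a single-crossing (Topkis-type) argument. Since $P_s(R)$ is strictly decreasing in $R$, the function $u(R;V)$ has decreasing differences in $(R,V)$. For $R'>R$,
\[
u(R';V)-u(R;V)=V\big(P_s(R')-P_s(R)\big)-\lambda\big(s(R')-s(R)\big),
\]
and this is non-increasing in $V$. The standard argument then applies. Let $V_1<V_2$, let $R_2=R^\star(V_2)$, and suppose $R_2>R_1=R^\star(V_1)$. Optimality at $V_2$ gives $u(R_2;V_2)\ge u(R_1;V_2)$. Decreasing differences give $u(R_2;V_1)-u(R_1;V_1)\ge u(R_2;V_2)-u(R_1;V_2)\ge0$. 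So $R_2$ would also be a maximizer at $V_1$ and larger than $R_1$, which contradicts the choice of the largest maximizer. Hence $R^\star$ is non-increasing.

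For the behaviour just above threshold, write $u(R)=P_s(R)\,(V-\lambda c(R))$. As $V\downarrow\lambda c^\star$, any $R$ with $c(R)>c^\star$ eventually has $u(R)\le0$, while the minimizers of $c$ have $u>0$. By finiteness of $\mathcal{R}$, for $V$ in a right neighbourhood of the threshold the maximizers lie among the minimizers of $c$. Among these we must show the largest maximizer is $R_c$. If $c(R)=c(R')=c^\star$ with $R'>R$, then $u(R)/P_s(R)=u(R')/P_s(R')$ and $P_s(R)>P_s(R')$. This tie-break is the delicate step. For $V$ strictly above threshold the common factor $V-\lambda c^\star$ is positive, so the larger $P_s$ wins, and the maximizer is the smallest minimizer of $c$, not $R_c$. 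I would therefore state the claim as ``minimizes $c$'' (which is all the theorem asserts) and derive ``at or below $R_c$'' separately. Every transmitted message has $R^\star(V)\le R^\star(V_0)$ for some $V_0$ just above threshold, by monotonicity, and $R^\star(V_0)$ is a minimizer of $c$ and hence $\le R_c$.

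For large $V$, divide by $V$: $u(R)/V\to P_s(R)$, which is uniquely maximized by the smallest feasible rate, so that rate is selected once $V$ is large enough. Finally, if $(L+H)/R$ is an integer for every $R$, then $c(R)=(L+H)/(R\,P_s(R))$. Minimizing $c$ is then the same as maximizing the goodput $R\,P_s(R)$. I expect the main obstacle to be the careful handling of ties and of the ``largest minimizer/maximizer'' conventions near threshold. The monotonicity and threshold steps are routine.
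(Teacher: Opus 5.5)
Your proposal is correct and follows the paper's proof essentially step for step. It uses the threshold from $u(R)=P_s(R)\,(V-\lambda c(R))$, monotonicity of $c^\star$ because $P_s$ increases in $\bar\gamma$, decreasing differences plus Topkis (which you prove directly) for $R^\star(V)$, the gap argument just above the threshold, $u/V\to P_s$ for large $V$, and the ceiling-free identity for goodput. Your remark on ties adds something the paper leaves implicit: just above the threshold the maximizer is the \emph{smallest} minimizer of $c$, and ``at or below $R_c$'' then follows from monotonicity.
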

\begin{proof}
(i) The message is sent iff $\max_R u(R)>0$, i.e., iff $V>\lambda s(R)/P_s(R)$ for some $R$.
By \eqref{eq:ps}, $P_s(R;\bar\gamma)$ increases in $\bar\gamma$, so each $c(R;\bar\gamma)$ and
their minimum are non-increasing in $\bar\gamma$. (ii) For $R_1<R_2$,
$u(R_2)-u(R_1)=V[P_s(R_2)-P_s(R_1)]+\lambda(s(R_1)-s(R_2))$ decreases in $V$ because $P_s$
decreases in $R$; $u$ has decreasing differences in $(R,V)$, and Topkis' theorem
\cite{topkis1998supermodularity} shows that the largest maximizer is non-increasing in
$V$. At $V=\lambda c^\star(1+\epsilon)$, $u(R)=\lambda P_s(R)\,[c^\star(1+\epsilon)-c(R)]$, which
is positive for $R=R_c$ and negative for every $R$ with $c(R)>c^\star(1+\epsilon)$; for
$\epsilon$ below the relative gap between the smallest and second-smallest values of $c$,
the maximizer is therefore a minimizer of $c$. As $V\to\infty$, $u(R)/V\to P_s(R)$, which
is maximized by the lowest feasible rate. Without ceilings, $c(R)=(L+H)/(R\,P_s(R))$.
\end{proof}
Theorem~\ref{thm:structure} answers \emph{when} and \emph{how reliably} with one number,
the price of a successful delivery $\lambda c^\star(\bar\gamma)$. A message is sent when its
value exceeds it, at the margin it is sent at the cost-efficient (goodput-maximizing)
rate that the rate-aware periodic baseline of Section~\ref{sec:results} uses for every
message, and the further its value exceeds the margin, the more of its channel uses are
spent on protection. Unequal error protection thus follows from the task rather than
being imposed. The ordering concerns one message; messages with different payloads are
compared through their thresholds. The argument only uses that $P_s$ increases in
$\bar\gamma$ and decreases in $R$, so it also holds for the fading-averaged FBL model; for
evidence, whose payload grows with the prefix, (i) holds with $V$ and $L$ indexed by
the option. In the logged instances of Section~\ref{sec:exact}, every intent and
confirmation selected by the decoupled rule used a rate at or below $R_c$, and
\ConfMinRate\% of the confirmations, whose value is fixed and high, used the most robust
feasible rate; the only exceptions (\IntentExcHi\% and \IntentExcLo\% of the intents at
$W=100$ and $W=50$) came from the greedy fill, where lower rates no longer fitted into
the residual budget.

\begin{algorithm}[t]
\caption{GOSC at agent $k$ in slot $t$}\label{alg:gosc}
\begin{algorithmic}[1]
\State Sense, update $\delta_k$; if $\sigma(\ell^E+\delta_k)(c)\ge\tau$, add $c$ to the pending confirmations
\State Select intention $g_k$ by \eqref{eq:score}
\State \emph{What.} Admit record $o_k^t$ to the queue if it passes the relevance filter
\State Build groups for confirmations (fixed value) and for intent and evidence prefixes (value, e.g., \eqref{eq:voi}), for every $R\in\mathcal{R}$
\State \emph{When and how reliably.} $\lambda\gets\eta$; \textbf{if} $S(\eta)>W$ \textbf{then} bisect $\lambda$ until $S(\lambda)\le W$
\State Transmit $x(\lambda)$; greedily add the remaining groups that fit
\State On HARQ acknowledgment, remove delivered records from the queue; on broadcast, update $\delta_k$ (closes the loop)
\end{algorithmic}
\end{algorithm}

\subsection{Exactness of the Lagrangian Scheduler}\label{sec:exact}
\begin{theorem}[Exactness and optimality gap]\label{thm:exact}
Let $\mathrm{OPT}$ be the optimum of \eqref{eq:mck} and $J(x)=\sum_{i,o}x_{io}u_{io}(\eta)$.
\begin{enumerate}[label=(\roman*),leftmargin=*,itemsep=0pt]
\item If $S(\eta)\le W$, then $x(\eta)$ is optimal.
\item For every $\lambda\ge\eta$ with $S(\lambda)\le W$,
$J(x(\lambda))\le\mathrm{OPT}\le J(x(\lambda))+(\lambda-\eta)\,(W-S(\lambda))$. In particular,
$x(\lambda)$ is optimal if $S(\lambda)=W$, and it is always optimal for the budget
$S(\lambda)$.
\item Let $\lambda^\star=\inf\{\lambda\ge\eta:S(\lambda)\le W\}>\eta$ and suppose that one group
$i^\star$ changes its option at $\lambda^\star$, from $o^-$ (below $\lambda^\star$) to $o^+$
(above), as is generic. Then
$\mathrm{OPT}-J(x(\lambda^\star{+}))<u_{i^\star o^-}(\eta)-u_{i^\star o^+}(\eta)\le\max_{i,o}u_{io}(\eta)$;
hence the relaxation loses less than the net value of one message.
\end{enumerate}
\end{theorem}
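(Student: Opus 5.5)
The plan is the standard weak-duality argument for the Lagrangian relaxation of the multiple-choice knapsack \eqref{eq:mck}. For a multiplier $\mu=\lambda-\eta\ge0$, I would define the dual function
\[
D(\lambda)=\max_{x}\Big\{\textstyle\sum_{i,o}x_{io}u_{io}(\lambda)\Big\}+(\lambda-\eta)W,
\]
where the maximum runs over binary $x$ satisfying only the group constraints $\sum_o x_{io}\le1$.

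The key identity is $u_{io}(\eta)=u_{io}(\lambda)+(\lambda-\eta)s_{io}$. Summing it over any $x$ that is feasible for \eqref{eq:mck} gives
\[
J(x)=\textstyle\sum x_{io}u_{io}(\lambda)+(\lambda-\eta)\sum x_{io}s_{io}\le \sum x_{io}u_{io}(\lambda)+(\lambda-\eta)W\le D(\lambda).
\]
Hence $\mathrm{OPT}\le D(\lambda)$. The inner maximum decouples across groups. In each group it picks the option with the largest $u_{io}(\lambda)$ if that value is positive, and nothing otherwise. This is exactly the decoupled rule $x(\lambda)$, so $D(\lambda)=J(x(\lambda))+(\lambda-\eta)(W-S(\lambda))$, again by the identity.

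Parts (i) and (ii) then follow directly. If $S(\lambda)\le W$, then $x(\lambda)$ is feasible, so $J(x(\lambda))\le\mathrm{OPT}\le D(\lambda)$, which is the sandwich in (ii). When $S(\lambda)=W$ the gap term vanishes. Replacing $W$ by $S(\lambda)$ throughout the argument shows that $x(\lambda)$ is optimal for that budget. Part (i) is the special case $\lambda=\eta$, where the gap term is zero regardless of the slack.

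For (iii) I would work at the breakpoint $\lambda^\star$ itself rather than above it. The function $D$ is a maximum of finitely many affine functions of $\lambda$, so it is continuous. At $\lambda^\star$ group $i^\star$ is tied, $u_{i^\star o^-}(\lambda^\star)=u_{i^\star o^+}(\lambda^\star)$, and every other group keeps its option. Therefore both $x^-=x(\lambda^\star{-})$ and $x^+=x(\lambda^\star{+})$ attain the inner maximum at $\lambda^\star$. The option $o^+$ may be ``nothing'', with $s=0$ and $u=0$.

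Using $x^-$, whose load satisfies $S^->W$ because $\lambda^\star$ is the infimum and $\lambda^\star>\eta$, I get
\[
\mathrm{OPT}\le D(\lambda^\star)=J(x^-)+(\lambda^\star-\eta)(W-S^-)<J(x^-).
\]
Since $x^-$ and $x^+$ differ only in group $i^\star$, we have $J(x^-)-J(x^+)=u_{i^\star o^-}(\eta)-u_{i^\star o^+}(\eta)$, which is the first inequality.

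For the second inequality I need $u_{i^\star o^+}(\eta)\ge0$. This holds if $o^+$ is ``nothing''. Otherwise $o^+$ is selected at some $\lambda>\eta$ with $u_{i^\star o^+}(\lambda)>0$, and $u$ is non-increasing in $\lambda$, so $u_{i^\star o^+}(\eta)\ge u_{i^\star o^+}(\lambda)>0$. Hence the difference is at most $u_{i^\star o^-}(\eta)\le\max_{i,o}u_{io}(\eta)$.

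The main obstacle is the bookkeeping at $\lambda^\star$. I must make precise that the generic single-group switch makes $x^-$ and $x^+$ both dual-optimal at $\lambda^\star$. I must also make precise that the strict inequality really needs $S^->W$, so that the strict inequality $\lambda^\star>\eta$ is used. I would handle both points via the continuity of $D$ and the finiteness of the set of breakpoints, taking one-sided limits of $x(\lambda)$.
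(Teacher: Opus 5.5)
Your proposal is correct and follows the paper's proof. The same weak-duality sandwich via $u_{io}(\eta)=u_{io}(\lambda)+(\lambda-\eta)s_{io}$ gives (i) and (ii). For (iii), the paper bounds $\mathrm{OPT}$ through $x^+$ together with the tie identity $J(x^-)-J(x^+)=(\lambda^\star-\eta)(S(x^-)-S(x^+))$; this yields the same dual value $D(\lambda^\star)$ that you evaluate through $x^-$. Your monotonicity argument for $u_{i^\star o^+}(\eta)\ge0$ also fills in a step that the paper only asserts.
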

\begin{proof}
For any feasible $x$ and $\lambda\ge\eta$,
$J(x)=\sum x_{io}u_{io}(\lambda)+(\lambda-\eta)\sum x_{io}s_{io}\le\sum_i\max_o u_{io}(\lambda)^+ +(\lambda-\eta)W
=J(x(\lambda))+(\lambda-\eta)(W-S(\lambda))$, because $x(\lambda)$ maximizes
$\sum x_{io}u_{io}(\lambda)$ under the group constraints alone. This is (ii); replacing $W$
by $S(\lambda)$ gives optimality for that budget \cite{everett1963generalized}, and (i) is
the case $\lambda=\eta$. For (iii), $x^-=x(\lambda^\star{-})$ and $x^+=x(\lambda^\star{+})$ both
maximize the Lagrangian at $\lambda^\star$, so
$J(x^-)-J(x^+)=(\lambda^\star-\eta)(S(x^-)-S(x^+))$; with $S(x^-)>W$ and (ii),
$\mathrm{OPT}-J(x^+)\le(\lambda^\star-\eta)(W-S(x^+))<J(x^-)-J(x^+)$, which is the change of
the net value of group $i^\star$, and $u_{i^\star o^+}(\eta)\ge0$.
\end{proof}
Theorem~\ref{thm:exact} explains why the heuristic is almost always exact. On 12{,}246 and
13{,}125 logged scheduling instances ($W=100$ and $W=50$; 30 search missions each), the
price $\eta$ alone kept the budget slack in \SlackHi\% and \SlackLo\% of the instances,
where (i) certifies optimality; in the remaining instances the relaxation was still
exact in \BindExactHi\% and \BindExactLo\% of the cases, with mean relative gaps of
\BindGapHi\% and \BindGapLo\% there, and 0.22\% and 0.14\% over all instances. Event
triggering and the knapsack thus interact favourably; a scheduler that sends only what is
worth its price rarely saturates its budget, and whenever it does not, the decoupled rule
is optimal. These statements concern the per-slot surrogate \eqref{eq:mck}, not mission
optimality.

\subsection{How Accurate Must the Value Be?}\label{sec:acc}
Values such as \eqref{eq:voi} rest on a model of the teammates and are necessarily
approximate. Let the scheduler use estimates $\hat V_{io}=\xi_iV_{io}$, where the error
factor $\xi_i\in[e^{-\delta},e^{\delta}]$ of message $i$ is common to its options, and let the
\emph{threshold ratio} of message $i$ be $\theta_i=\min_o \lambda s_{io}/v_{io}$ (the message is
sent under the true values iff $\theta_i<1$).
\begin{prop}[Robustness to value errors]\label{prop:robust}
For a fixed price $\lambda\ge\eta$, let $o^\star_i$ and $\hat o_i$ be the decoupled choices
under the true and the estimated values (``no transmission'' being an option with
$s=v=0$). Then
\begin{enumerate}[label=(\alph*),leftmargin=*,itemsep=0pt]
\item the decision to send message $i$ can differ only if
$e^{-\delta}<\theta_i<e^{\delta}$;
\item $0\le u_{io^\star_i}(\lambda)-u_{i\hat o_i}(\lambda)\le(e^{\delta}-1)\,\lambda\,
|s_{io^\star_i}-s_{i\hat o_i}|\le(e^{\delta}-1)\lambda W$.
\end{enumerate}
\end{prop}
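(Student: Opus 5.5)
The plan is to argue group by group, since for a fixed price $\lambda$ the decoupled rule treats each message $i$ separately. I write $u_{io}(\lambda)=v_{io}-\lambda s_{io}$ for the true net value and $\hat u_{io}(\lambda)=\xi_i v_{io}-\lambda s_{io}$ for the estimated one. This uses that $\hat v_{io}=\hat V_{io}P_s=\xi_i v_{io}$, because the error factor is common to all options of message $i$. I include ``no transmission'' as the option with $s=v=0$. Then $o^\star_i$ maximizes $u_{io}$ and $\hat o_i$ maximizes $\hat u_{io}$ over the same finite option set.

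\emph{Part (a).} As in Theorem~\ref{thm:structure}(i), message $i$ is sent under the true values iff some option has $v_{io}-\lambda s_{io}>0$, i.e., iff $\min_o\lambda s_{io}/v_{io}<1$, i.e., iff $\theta_i<1$. The same computation with $\xi_iv_{io}$ shows that it is sent under the estimates iff $\theta_i<\xi_i$.

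The two decisions therefore differ exactly when $\theta_i$ lies between $1$ and $\xi_i$. More precisely, this is the case iff $\theta_i\in[1,\xi_i)$ or $\theta_i\in[\xi_i,1)$. Since $\xi_i\in[e^{-\delta},e^{\delta}]$, this forces $\theta_i$ into the band around $1$ stated in (a).

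The only delicate point is the endpoint case $\xi_i=e^{-\delta}$, $\theta_i=e^{-\delta}$. There the estimated net value is exactly zero, so the message is not sent, while it is sent under the true values. I will handle this by the tie convention: a zero net value is a boundary case, to be assigned consistently with the strict inequality ``positive net value'' in the definition of the rule. Alternatively, I will note that the band must be read as half-closed at that endpoint.

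\emph{Part (b).} Set $a=o^\star_i$, $b=\hat o_i$ and $\Delta s=s_{ia}-s_{ib}$. The lower bound $u_{ia}-u_{ib}\ge0$ is immediate from the optimality of $a$ for the true net values. For the upper bound I combine two inequalities into a sandwich on $v_{ia}-v_{ib}$:
\begin{itemize}
\item optimality of $a$ under the true values, $u_{ia}\ge u_{ib}$, gives $v_{ia}-v_{ib}\ge\lambda\Delta s$;
\item optimality of $b$ under the estimates, $\hat u_{ib}\ge\hat u_{ia}$, gives $\xi_i(v_{ia}-v_{ib})\le\lambda\Delta s$.
\end{itemize}
Substituting the second into $u_{ia}-u_{ib}=(v_{ia}-v_{ib})-\lambda\Delta s$ gives
$$u_{ia}-u_{ib}\le\lambda\Delta s\,(1/\xi_i-1).$$

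I then split on the sign of $\Delta s$.
\begin{itemize}
\item If $\Delta s>0$, the sandwich forces $\xi_i\le1$, and the bound is $\lambda|\Delta s|(1/\xi_i-1)\le(e^{\delta}-1)\lambda|\Delta s|$.
\item If $\Delta s<0$, the bound reads $\lambda|\Delta s|(1-1/\xi_i)\le\lambda|\Delta s|(1-e^{-\delta})\le(e^{\delta}-1)\lambda|\Delta s|$.
\item If $\Delta s=0$, the sandwich gives $u_{ia}=u_{ib}$, so the bound holds trivially.
\end{itemize}
The final inequality of (b) follows because every option, including ``no transmission'', occupies between $0$ and $W$ channel uses, so $|\Delta s|\le W$.

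I expect the main obstacle to be the sign bookkeeping in this sandwich, since dividing by $\xi_i$ and the direction of $\Delta s$ interact. I will organize that step carefully, together with the tie and endpoint convention in (a). The remaining steps are routine.
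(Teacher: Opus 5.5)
Your proposal is correct and follows the paper's proof: part (a) is the same comparison of $\theta_i$ with $1$ and with $\xi_i$, and in (b) your inequality $\xi_i(v_{ia}-v_{ib})\le\lambda\Delta s$ is exactly the paper's subgradient step for $g(\ell)=\max_o(v_{io}-\ell s_{io})$ at the rescaled price $\lambda/\xi_i$, so your sign split can be replaced by the single bound $|1/\xi_i-1|\le e^{\delta}-1$. Your endpoint remark is also right, and the paper's proof passes over it: with $\delta>0$ and the rule's strict positivity, $\xi_i=\theta_i=e^{-\delta}$ does flip the decision, and no tie convention rescues the open band (sending on zero net value merely moves the failure to $\xi_i=\theta_i=e^{\delta}$), so (a) should read $e^{-\delta}\le\theta_i<e^{\delta}$.
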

\begin{proof}
(a) Message $i$ is sent under the estimates iff $\xi_iv_{io}>\lambda s_{io}$ for some $o$,
i.e., iff $\xi_i>\theta_i$. (b) The estimated choice maximizes
$\xi_i(v_{io}-\lambda' s_{io})$ with $\lambda'=\lambda/\xi_i$, i.e., it is the true choice at
price $\lambda'$. The function $g(\ell)=\max_o(v_{io}-\ell s_{io})$ is convex with
subgradient $-s_{io(\ell)}$, so $g(\lambda)-g(\lambda')\le(\lambda'-\lambda)s_{io^\star}$, and the loss
$g(\lambda)-[g(\lambda')-(\lambda-\lambda')s_{i\hat o}]\le(\lambda'-\lambda)(s_{io^\star}-s_{i\hat o})$;
finally $|\lambda'-\lambda|=\lambda|1/\xi_i-1|\le\lambda(e^{\delta}-1)$.
\end{proof}
Value errors are therefore cheap in two respects. They affect only \emph{marginal}
messages, whose value is within a factor $e^{\delta}$ of the price of their delivery, and
the loss on such a message is bounded by the price of the channel uses it gains or loses,
independently of how large its value is. Messages that are clearly valuable
(confirmations, contested claims) or clearly irrelevant keep their decisions. In the
logged instances, \ZeroValue\% of the candidate intent and evidence messages had zero
ToM value, which no multiplicative error changes, and only \WithinOne\% had a threshold
ratio within a factor $e$ of one (\WithinTwo\% within $e^{2}$); every other decision is
unchanged by any error up to that factor. Proposition~\ref{prop:robust} concerns one slot
at a fixed price; Section~\ref{sec:noise} measures the effect of value errors on whole
missions, including errors that are not bounded by any $\delta$.

\subsection{Price Coordination on a Shared Uplink}\label{sec:sharedsched}
When all agents share $B$ channel uses per slot, the knapsack \eqref{eq:mck} couples all
agents' groups through one budget, and its multiplier becomes a network \emph{shadow
price} \cite{kelly1998rate}. GOSC therefore needs no central solver. The BS broadcasts a
price $\mu_t$ (8 bits per slot); each agent applies the decoupled rule at
$\lambda=\eta+\mu_t$ and requests the channel uses of its selection together with its
value density (a 16-bit request); the BS grants the requests in decreasing value density
while they fit and updates $\mu_{t+1}=\mu_t\exp\!\big(\kappa(D_t/B-1)\big)$, where $D_t$ is
the requested total ($\kappa=1$; $\mu$ is reset to zero below $10^{-7}$ and restarted from
$10^{-7}$ when demand exceeds $B$). Each agent's rule is the decoupled rule at a common
price, so Theorem~\ref{thm:structure} continues to hold with $\lambda=\eta+\mu_t$. The
periodic baselines send an 8-bit buffer status report \cite{3gpp38321}, and the BS divides
$B$ among them by max-min fair water-filling; every grant costs 32 downlink bits. A
central alternative in which each agent reports all its Lagrangian-relevant options, so
that the BS solves the network-wide knapsack exactly, used more control signalling
than it saved in our tests, and we do not pursue it.

\subsubsection*{Complexity}
Evaluating \eqref{eq:voi} costs one box filter (via integral images) and one softmax per
teammate over the $N^2$ cells, so building the groups costs
$\mathcal{O}(|\mathcal{R}|KN^2)$ per agent and slot, and the bisection
$\mathcal{O}(I\,G\,|\mathcal{R}|)$ for $I$ iterations and $G$ groups. With $N=32$, $K=6$
and $|\mathcal{R}|=9$, our single-threaded Python implementation needs 2.0\,ms per
decision (median; 95th percentile 6.8\,ms) on an Apple M1 CPU; the throughput and AoI
values are cheaper still.

\section{Performance Evaluation}\label{sec:results}
We first establish the main result, resource efficiency against tuned semantic periodic
schedules (Section~\ref{sec:eff}), and then examine each of the three decisions, namely
\emph{what} (Section~\ref{sec:what}), \emph{when}, including how accurate the value must
be (Section~\ref{sec:noise}), and \emph{how reliably} (Section~\ref{sec:how}). We then
test the result under more realistic radio resources (Section~\ref{sec:realism}) and
with LLM agents (Section~\ref{sec:llm}).

\begin{table}[t]
\centering
\caption{Default parameters}
\label{tab:params}
\resizebox{\columnwidth}{!}{%
\begin{tabular}{@{}ll@{}}
\toprule
Parameter & Value\\
\midrule
Grid, cell size, slot & $32\times32$, 10\,m, 1\,s\\
Agents $K$, survivors $V$ & 6, 8 (uniform; clustered in Sec.~\ref{sec:eff})\\
Footprint radius $r$; $p_d$, $p_{fa}$; $\tau$ & 2 ($5\times5$); 0.85, 0.05; 0.99\\
Planner $\beta$, $\varsigma$, $h$ & 0.1, 5 cells, 1.2\\
Channel uses per agent and slot $W$ & 100\\
$\bar\gamma_{\mathrm{ref}}$ at 100\,m; path-loss exponent $\alpha$ & 5\,dB; 3\\
Rate set $\mathcal{R}$ (bit/channel use) & $\{0.25,0.5,1,1.5,2,3,4,5,6\}$\\
Conventional link adaptation; data lifetime & block error rate $\le$ 10\%; 20 slots\\
Per-packet overhead $H$ & 0 (16, 32, 64 in Sec.~\ref{sec:overhead})\\
Rescue deadlines; service time & $\mathcal{U}[60,180]$; 8 slots\\
GOSC $\eta$, $\varepsilon$, $\vartheta$, $v_C$ & $10^{-5}$, 0.02, 0.1, 5\\
\bottomrule
\end{tabular}}
\end{table}

\begin{table*}[t]
\centering
\caption{Main result. Fresh-seed validation (seeds 3000--3099) of frozen operating
points, with the target and mean outcome (completion time in slots, or survivors saved in \%),
the margin to the target (positive when met with room) and uplink channel uses ($\times10^3$),
for GOSC and for the most frugal of the 38 tuned periodic configurations meeting the
target on the evaluation seeds; paired outcome difference (GOSC minus periodic) and
channel-use ratio (periodic over GOSC). Brackets show 95\% bootstrap CIs over the
fresh seeds}
\label{tab:val}
\resizebox{\textwidth}{!}{%
\begin{tabular}{@{}lrrlrrlrll@{}}
\toprule
 & & \multicolumn{3}{c}{GOSC} & \multicolumn{3}{c}{Best tuned periodic} & & \\
Setting & Target & Outcome & Margin & Ch.\ uses & Outcome & Margin & Ch.\ uses & Outcome diff. & \textbf{Ratio}\\
\midrule
\csname @@input\endcsname figures/table_validation.tex
\bottomrule
\end{tabular}}
\end{table*}

\subsection{Setup, Baselines and Statistics}
Table~\ref{tab:params} lists the default parameters ($T_{\max}=400$ slots). GOSC's
hyper-parameters ($\eta=10^{-5}$, $\varepsilon=0.02$, $\vartheta=0.1$) and the rescue
deadlines were fixed on calibration seeds 1000--1079; all results use the disjoint
evaluation seeds 0--99, with common random numbers across schemes.

\emph{Baselines.} All schemes share the agents, the planner and the edge fusion, and
serve confirmations with strict priority. The strongest baselines form a family of
\emph{tuned semantic periodic schedules}. \emph{Rate-aware periodic} sends every $k$
slots one packet with the last $k$ records in GOSC's aggregated encoding plus the
intention, at the rate that maximizes the expected goodput $R\,P_s(R)$, with
fragmentation; we sweep $k\in\{1,2,4,8,16\}$, with and without GOSC's relevance filter.
\emph{Information-suppressing periodic} retains only records whose belief change exceeds
a threshold $\theta\in\{0.02,0.05,0.1,0.2,0.4,1,2\}$ (expected survivors, the same
measure as GOSC's filter), sends a short intent-only packet when only the intention
changed, and otherwise stays silent, for $k\in\{1,2,4,8\}$. We always use the best of
these 38 configurations. Conventional baselines send every slot with conventional link
adaptation. \emph{Semantic periodic} sends a structured update, \emph{raw} sends the position
and an 8-bit score per footprint cell, and \emph{natural language} sends an English status
report;
\emph{report only} sends confirmations only. The \emph{genie} delivers every update
instantly and error-free; it is an \emph{ideal-communication reference}, not a
performance bound, because more information can also change the heuristic planner's
behaviour unfavourably.

\emph{Statistics.} We compare schemes by paired differences over seeds with 95\%
percentile-bootstrap confidence intervals (CIs; 4{,}000 resamples), written $\Delta$ [lower,
upper]; an interval containing zero means that no significant difference was detected,
not equivalence. Our main comparisons are at \emph{matched outcomes}; targets are set
relative to the completion time $T_{\mathrm g}$ (search) or the saved fraction
$S_{\mathrm g}$ (rescue) of the ideal-communication reference, and each method's operating
point with the fewest uplink channel uses whose mean outcome meets the target is
selected on the evaluation seeds, \emph{frozen}, and re-run on fresh seeds 3000--3099
(frozen points in Appendix~\ref{app}, Table~\ref{tab:frozen}). This avoids
interpolation and selection bias.


\subsection{Main Result on Resource Efficiency Against Tuned Semantic Schedules}\label{sec:eff}
Table~\ref{tab:val} is our main result. In all six settings, GOSC reached the common
target with \textbf{1.2--8.5 times fewer uplink channel uses} than the most frugal
configuration of the tuned periodic family on the fresh seeds, and every ratio interval
lies above one. The advantage grows as resources become scarce, reaching 1.8, 2.6 and 8.5 times in
the rescue task at $W=100$, $W=50$ and $-5$\,dB, and 5.9 times with $K=10$ agents, whose
periodic updates compete for the same knowledge. The savings are not bought with
outcomes. In the rescue task, GOSC met the target with a significant margin in all
three settings and saved significantly more survivors than the periodic configuration
at $W=100$ (+3.6 [1.2, 6.1] percentage points) and $W=50$ (+7.2 [4.6, 10.0]); at
$-5$\,dB, no outcome difference was detected. In the search task, both methods meet or
narrowly miss the targets on the fresh seeds, and neither the margins nor the outcome
differences are significant. The selected periodic configurations are those with
filtering or suppression, so the savings do not come merely from allowing GOSC to
withhold updates; they come from deciding \emph{which} update to send \emph{when}, and
how robustly.

Figure~\ref{fig:rescue} shows the same picture across operating points;
the GOSC frontiers extend to fewer channel uses than any periodic configuration and
dominate there, and the curves converge as the channel use grows. Where resources are
abundant, periodic reporting is as good or better (Section~\ref{sec:realism}); GOSC's
contribution is efficiency under constrained resources.

\begin{figure}[t]
\centering
\includegraphics[width=\columnwidth]{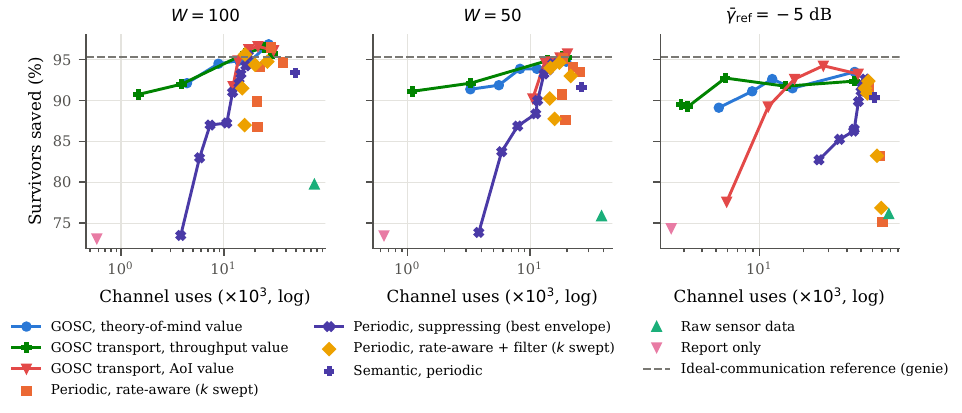}
\caption{Rescue task with deadlines, showing survivors saved versus uplink channel uses for
$W=100$, $W=50$ and $\bar\gamma_{\mathrm{ref}}=-5$\,dB (100 missions per point). Value-priced
rules sweep $\eta$; squares and diamonds are rate-aware periodic configurations without
and with the relevance filter ($k$ swept); the violet curve is the best envelope of the
information-suppressing periodic variant; single markers are conventional baselines;
the dashed line is the ideal-communication reference.}
\label{fig:rescue}
\end{figure}

\begin{table*}[t]
\centering
\caption{What to send, comparing representations in the search task at the default
operating point (100 missions; $\pm$ gives the 95\% CI of the mean; $\Delta$ is the paired
difference in completion time to GOSC with 95\% bootstrap CI; false declarations are
survivors declared at a wrong cell, per mission)}
\label{tab:default}
\setlength{\tabcolsep}{4pt}
\begin{tabular}{@{}lrrrrrr@{}}
\toprule
Scheme & Completion & $\Delta$ vs.\ GOSC & Payload & Channel uses & Confirm.\ latency & False\\
 & time (slots) & (slots) & (kbit) & ($\times10^3$) & (slots) & declarations\\
\midrule
\csname @@input\endcsname figures/table_default.tex
\bottomrule
\end{tabular}
\end{table*}

\subsection{Semantic Representation for What to Send}\label{sec:what}
Table~\ref{tab:default} compares representations in the search task at the default
operating point. GOSC completes the mission faster than raw-data sharing ($\Delta=60.9$
[48.1, 74.2] slots), natural-language messaging (77.7 [67.7, 88.3]) and report-only
operation (115.7 [98.5, 133.7]). Against semantic periodic updates, no significant
difference in completion time was detected (5.8 [$-2.4$, 13.8]), while GOSC used 3.4
times fewer channel uses. The natural-language comparison is illustrative rather than
a central claim; an English report of about 1{,}480 bits against a
typed message of a few tens of bits is bound to lose, and its reports occupy at least
eight slots each, so that even confirmations wait 32.8 slots on average. The informative
comparison is the one against tuned semantic schedules in Section~\ref{sec:eff}. Every
scheme completed all 100 missions within $T_{\max}$, and false declarations stayed below
0.2 per mission.

\subsection{When to Send, and How Accurate Must the Value Be?}\label{sec:noise}
GOSC's trigger compares a message's value with the price of its delivery
(Theorem~\ref{thm:structure}). We therefore ask \emph{how accurate this value needs to be for
resource-efficient communication}. Three experiments answer it.

\subsubsection*{Value functions at matched resources}
Table~\ref{tab:values} compares the four value functions on
GOSC's transport at matched channel uses of 5k, 10k and 20k. The decision-aware ToM value
is significantly better than belief divergence, by 6.9--16.5 slots in seven of ten
comparisons, but against the throughput or AoI values, which encode no model of the
teammates, it is significantly better in only one of fifteen comparisons ($K=10$ at
20k, 5.0 [0.9, 7.6] slots), which does not survive a Bonferroni correction, and
significantly worse in none. In the rescue task, the throughput value is as efficient as
the ToM value or more (Figure~\ref{fig:rescue}).

\begin{table}[t]
\centering
\caption{Valuation rules at matched radio budgets, showing the completion time of GOSC
(slots) and the paired difference of each rule to GOSC, $\Delta$ [95\% bootstrap CI]
(positive means GOSC is faster; -- means the budget is outside the rule's frontier). The Bonferroni
correction covers the fifteen comparisons with the simpler values}
\label{tab:values}
\setlength{\tabcolsep}{3pt}
\resizebox{\columnwidth}{!}{%
\begin{tabular}{@{}lrrlll@{}}
\toprule
Setting & Budget & GOSC & Belief div. & Throughput & AoI\\
\midrule
\csname @@input\endcsname figures/table_values.tex
\bottomrule
\end{tabular}}
\end{table}

\subsubsection*{Imperfect values}
We then degraded GOSC's ToM value deliberately (Appendix~\ref{app}), using a log-normal error
$e^{\sigma z}$ per message with $\sigma=0.5$, 1 and 2 (typical errors of a factor 1.6, 2.7
and 7.4), values known only to their order of magnitude, and \emph{content-independent}
values that keep the distribution of the true values but none of their information,
with and without the relevance filter. Their Spearman correlations with the exact value
are $\rho=0.99$, 0.95, 0.89, 0.95 and 0.14. For every estimate and setting, we re-selected
the price with the rule of Section~\ref{sec:eff}, froze it and re-ran it on the fresh
seeds against the frozen periodic configuration of Table~\ref{tab:val}; in the rescue
task, the price grid was extended to $10^{-3}$ for all estimates, including the exact
one. Table~\ref{tab:noise} and Figure~\ref{fig:noise} show three results.

\emph{Precise values are not necessary.} With the four estimates whose rank correlation
is at least 0.89, GOSC met the target with significantly fewer channel uses than the
tuned periodic schedules in 20 of 24 combinations of setting and estimate; even errors
of a typical factor of seven ($\sigma=2$) left 1.3--9.6-fold reductions. Of the four
exceptions, two show no significant difference (clustered search with $\sigma=1$ and with
orders of magnitude), one misses the target (orders of magnitude with $K=10$), and one
reverses, namely $\sigma=1$ in the default search setting, where both families meet the target
only narrowly and the frozen price is fragile ($\sigma=0.5$ and $\sigma=2$ gave 1.75 and
1.27 there).

\emph{Coarse relevance is necessary.} Content-independent values still gave 2.0--7.2-fold
reductions in the rescue task, where the fixed priority of confirmations and the
price-based aggregation carry most of the benefit, but they lost the advantage in the
default search task (0.57), and without the relevance filter also with $K=10$ agents
(0.81; 1.33 with the filter).

\emph{Degradation is graceful.} The reduction shrinks smoothly as the accuracy decreases,
e.g., from 5.9 to 5.2, 3.7 and 1.8 in the rescue task. With the extended price grid, the
exact value itself reduced channel uses 5.9, 4.4 and 13.0-fold in the three rescue
settings, so the main result of Table~\ref{tab:val} is conservative (at $-5$\,dB the
exact value narrowly missed the target on the fresh seeds, by 0.6 [$-1.8$, 3.2]
percentage points, which is not significant).

\begin{table*}[t]
\centering
\caption{How accurate must the value be? Channel-use ratio of the frozen tuned periodic
configuration (Table~\ref{tab:val}) to GOSC with degraded value estimates, each with its
own frozen price, on the fresh seeds [95\% bootstrap CI]; $\rho$ is the Spearman correlation of
the estimate with the exact ToM value, and $\dagger$ marks that GOSC's mean outcome misses
the target on the fresh seeds. Rescue prices extended to $10^{-3}$ for every estimate}
\label{tab:noise}
\setlength{\tabcolsep}{3pt}
\resizebox{\textwidth}{!}{%
\begin{tabular}{@{}lrllllll@{}}
\toprule
Value estimate & $\rho$ & Search & Search, $K{=}10$ & Search, clustered & Rescue & Rescue, $W{=}50$ & Rescue, $-5$\,dB\\
\midrule
\csname @@input\endcsname figures/table_noise_r2.tex
\bottomrule
\end{tabular}}
\end{table*}

\begin{figure}[t]
\centering
\includegraphics[width=.8\columnwidth]{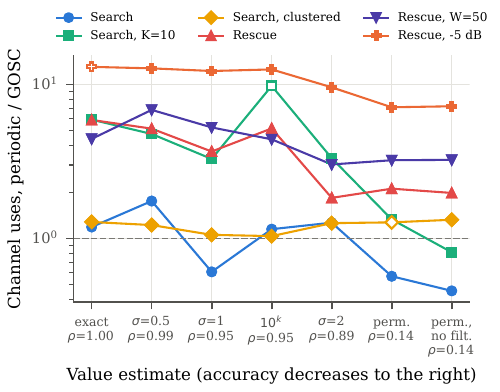}
\caption{How accurate must the value be? Ratio of the uplink channel uses of the frozen
tuned periodic configuration to those of GOSC with degraded value estimates (fresh
seeds, each estimate with its own frozen price; log scale); above the dashed line GOSC
is cheaper. Hollow markers indicate that GOSC's mean outcome misses the target. Estimates are ordered
by accuracy; $\rho$ is the rank correlation with the exact value.}
\label{fig:noise}
\end{figure}

\subsubsection*{Why coarse values suffice}
Proposition~\ref{prop:robust} explains the first result; a value error changes only
decisions on messages near their send threshold, and only \WithinOne\% of the candidate
messages were within a factor $e$ of it. A counterfactual diagnostic explains why even
exact values add little; a single message changed
a teammate's next target in only 6--12\% of 914 sampled decisions, its effect on the
completion time had both signs, and none of the tested scores, including the ToM value,
ranked high-impact messages significantly higher. Exact valuation of message \emph{sets}
did not help either; an exact set-valued scheduler that values the received set with the
joint ToM value and solves each slot by enumeration did not improve the completion time
in any of 1{,}800 paired missions, and was significantly slower in one setting (9.8 [1.9,
18.3] slots). Our answer to the research question is therefore that \emph{semantic valuation
for resource-efficient agent communication must be coarsely right, not precise}; it must
rank messages roughly correctly and encode task relevance, while the precision that
decision-aware models such as theory of mind aim for is not needed once the scheduler is
communication-aware.

\subsection{Value-Aware Rate Selection for How Reliably to Deliver}\label{sec:how}
Table~\ref{tab:ablation} removes one component at a time at the default price. Restricting
GOSC to the conventional link-adaptation rate, with a fallback to the lowest higher rate
at which a packet fits into a slot, costs 15.8 [7.7, 24.2] slots at $W=100$, while using
8.0k instead of 12.1k channel uses; at $W=50$, the difference (10.0 [$-0.8$, 20.2]) is not
significant. The rate is where GOSC spends channel uses on protection when a message is
worth it (Theorem~\ref{thm:structure}); the confirmations, whose value is highest, used the
most robust feasible rate in \ConfMinRate\% of the cases. The relevance filter has no
significant effect at the default price, and replacing the ToM value by the throughput
value changes the completion time insignificantly while, at the same price, using two to
three times more channel uses (at matched channel uses, Table~\ref{tab:values}). The
ablations change resource use as well as outcome, so they compare configurations rather
than equal-cost alternatives.

\begin{figure}[t]
\centering
\includegraphics[width=\columnwidth]{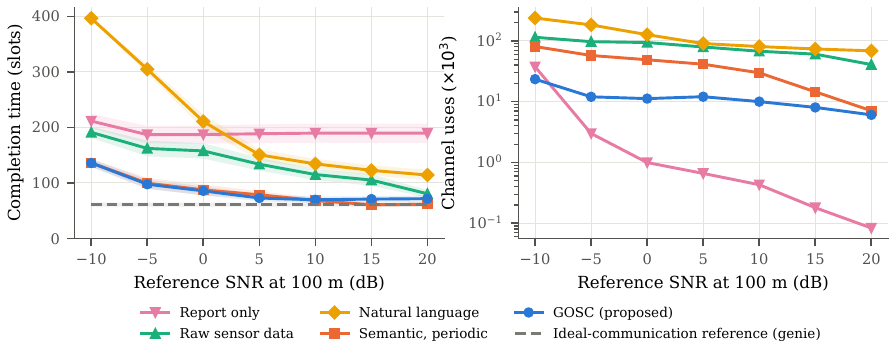}
\caption{Mission completion time (left) and uplink channel uses (right, log scale)
versus the reference SNR ($W=100$, $K=6$; outage model; 100 missions per point). Shaded
bands show 95\% confidence intervals of the mean. Completion times are capped at
$T_{\max}=400$ slots; the ideal-communication reference does not use the channel.}
\label{fig:snr}
\end{figure}

\subsubsection*{Channel quality and short packets}
Figure~\ref{fig:snr} sweeps the reference SNR. Against semantic periodic updates, no
significant difference in completion time was detected from $-10$ to 10\,dB, while GOSC
used 3.0--4.8 times fewer channel uses; at 15 and 20\,dB, the periodic scheme was faster
by about 10 slots, because with a good link sending everything is cheap. Natural
language degrades most as the SNR drops and finished only 13\% of the missions at
$-10$\,dB. Under the FBL model, the fading-averaged success
probability differs from \eqref{eq:ps} by at most 0.086 for the packets used (median
0.001), the ranking of the schemes is unchanged with and without a 2\,dB coding gap, and
GOSC used 1.2--4.9 times fewer channel uses than semantic periodic updates over the whole
sweep.

\subsection{Realistic Radio Resources}\label{sec:realism}
Two simplifications could favour short semantic messages. So far, every packet was free of
overhead, and every agent owned its channel uses. We now remove both, and add the
downlink to the cost.

\subsubsection{Per-packet overhead and total uplink--downlink cost}\label{sec:overhead}
GOSC's messages carry only 18--34 bits of payload, which headers, CRC and HARQ signalling
could dominate. We therefore added a fixed overhead of $H=16$, 32 and 64 bits to every
uplink packet, roughly a 16-bit transport-block CRC \cite{3gpp38212}, the CRC plus a
medium access control (MAC) subheader \cite{3gpp38321}, and a full layer-2 header stack. GOSC pays it on every
message, whereas the periodic schemes pay it once per slot for a transport block that
multiplexes their queue; the downlink broadcast carries it too, and HARQ feedback costs
one downlink bit per uplink transmission. For every $H$, both families were re-selected
on the evaluation seeds and re-run on the fresh seeds (Table~\ref{tab:overhead}).

\begin{table}[t]
\centering
\caption{How reliably, via component ablations at the default price in the search task
(100 missions; $\Delta$ is the paired difference to GOSC in slots). The variants differ in channel use
as well as outcome}
\label{tab:ablation}
{\footnotesize
\begin{tabular}{@{}r>{\raggedright\arraybackslash}p{2.15cm}rrl@{}}
\toprule
$W$ & Variant & Completion & Ch.\ uses & $\Delta$ [95\% CI]\\
 & & (slots) & ($\times10^3$) & \\
\midrule
\csname @@input\endcsname figures/table_ablation.tex
\bottomrule
\end{tabular}}
\end{table}

\emph{The advantage survives the overhead.} In five of the six settings, GOSC used
significantly fewer uplink channel uses than the re-tuned periodic family at every
overhead (in four of these 20 cases, one of the two families narrowly missed the target
on the fresh seeds, never significantly), and in the rescue task the advantage \emph{grows} with the overhead, from 1.8
to 2.5, 3.0 and 4.1 times at $W=100$, from 2.6 to 6.0 times at $W=50$, and from 8.5 to
16.6 times at $-5$\,dB. Theorem~\ref{thm:structure} explains why. The overhead enters the
cost of a delivery, so GOSC's send threshold rises and it sends fewer, fuller packets
(at $-5$\,dB, 95, 89, 82 and 54 transmissions per mission as $H$ grows), whereas the
periodic schedules keep their cadence and pay the overhead on every fragment (527 to 636
transmissions). The exception is the default search setting, where both families only
narrowly meet the target; the frozen ratio was 0.84 at $H=16$ (the interpolated frontier
ratio, 1.17 [0.82, 2.95], shows no significant difference), 1.62 at $H=32$ with GOSC
narrowly and insignificantly missing the target on the fresh seeds, and at $H=64$ only
GOSC reached the target at all.

\emph{Total cost.} The last block of Table~\ref{tab:overhead} adds the downlink broadcast,
HARQ feedback and the uplink control to the uplink data. Because the broadcast relays
what the uplink delivered, it is itself proportional to the useful uplink traffic, and
the ratios of the total cost are only slightly below the uplink ratios (e.g., 1.6 instead
of 1.8 in the rescue task). Table~\ref{tab:cost} breaks the total down for the default
search setting; the broadcast of the common knowledge adds 3.3k channel uses to GOSC's
11.8k (22\% of the total) and 3.7k to the periodic family's 14.0k, and HARQ feedback is
negligible. Raw data and natural language cost about six times more than GOSC in total.

\begin{table}[t]
\centering
\caption{Per-packet overhead $H$ (bits) and the fresh-seed channel-use ratio of the re-tuned
periodic family to GOSC [95\% bootstrap CI], for the uplink and for the total uplink and
downlink cost (broadcast with its overhead, HARQ feedback). $\dagger$ and $\ddagger$ mark that GOSC's or
the periodic configuration's mean outcome misses the target on the fresh seeds; ``only
GOSC'' means that no periodic configuration met the target on the evaluation seeds}
\label{tab:overhead}
\setlength{\tabcolsep}{2.5pt}
\resizebox{\columnwidth}{!}{%
\begin{tabular}{@{}lllll@{}}
\toprule
Setting & $H=0$ & $H=16$ & $H=32$ & $H=64$\\
\midrule
\csname @@input\endcsname figures/table_overhead_r2.tex
\bottomrule
\end{tabular}}
\end{table}

\begin{table}[t]
\centering
\caption{Total communication cost in the default search setting (fresh seeds; frozen
operating points re-selected for each $H$), listing the completion time (slots), uplink channel uses
including overhead, overhead bits, downlink broadcast of the common knowledge, downlink
HARQ feedback, and the total (all $\times10^3$ channel uses unless noted). $\dagger$ marks a mean
outcome that misses the target ($T\le76.2$ slots); UL and DL denote uplink and downlink}
\label{tab:cost}
\setlength{\tabcolsep}{3pt}
\resizebox{\columnwidth}{!}{%
\begin{tabular}{@{}lrrrrrr@{}}
\toprule
Scheme & $T$ & UL & Overh. & DL bcast & DL fb. & \textbf{Total}\\
 & & & (kbit) & & & \\
\midrule
\csname @@input\endcsname figures/table_cost_r2.tex
\bottomrule
\end{tabular}}
\end{table}

\subsubsection{Shared network-wide uplink}\label{sec:shared}
All agents now compete for a shared budget of $B$ channel uses per slot, allocated by the
BS in every slot (Section~\ref{sec:sharedsched}), for GOSC by its broadcast network price and
for the 38 periodic configurations by max-min fair grants on buffer status reports. Uplink cost
now includes the scheduling requests. Table~\ref{tab:shared} repeats the fresh-seed
validation, and Figure~\ref{fig:shared} shows the search frontiers. The competition does not
change the picture. In the rescue task ($B=300$), GOSC met the target with 4.8 [4.4, 5.2]
times fewer uplink channel uses (4.6 [4.3, 5.0] times in total uplink and downlink cost).
In the search task, the results mirror those with orthogonal budgets. With $B=600$, the
same total as $K$ agents with $W=100$ each, GOSC needed 1.16 [1.10, 1.24] times fewer
channel uses (1.19 with orthogonal budgets); with $B=300$, as with $W=50$, no significant
difference was detected (0.95 [0.89, 1.02]); and with $B=150$ neither family reached the
target, but GOSC reached the best completion time of the periodic family (82.4 versus
82.6 slots) with 9.4k instead of 14.5k channel uses (Figure~\ref{fig:shared}). The shared
price lets GOSC coordinate the agents without a central solver; an agent whose messages
are worth less than the current network price defers them to a later slot.

\begin{table*}[t]
\centering
\caption{Shared network-wide uplink budget $B$ (all agents compete; $K=6$), with fresh-seed
validation of frozen operating points as in Table~\ref{tab:val}. Uplink channel uses
include scheduling requests; the last column adds the downlink broadcast, HARQ feedback,
grants and the price broadcast. UL and DL denote uplink and downlink. Brackets show 95\% bootstrap CIs over the fresh seeds}
\label{tab:shared}
\resizebox{\textwidth}{!}{%
\begin{tabular}{@{}lrrlrrlrlll@{}}
\toprule
 & & \multicolumn{3}{c}{GOSC (price-coordinated)} & \multicolumn{3}{c}{Best tuned periodic (fair grants)} & & \multicolumn{2}{c}{Ratio periodic/GOSC}\\
Setting & Target & Outcome & Margin & UL ch.\ uses & Outcome & Margin & UL ch.\ uses & Outcome diff. & Uplink & UL + DL\\
\midrule
\csname @@input\endcsname figures/table_shared_r2.tex
\bottomrule
\end{tabular}}
\end{table*}

\begin{figure}[t]
\centering
\includegraphics[width=\columnwidth]{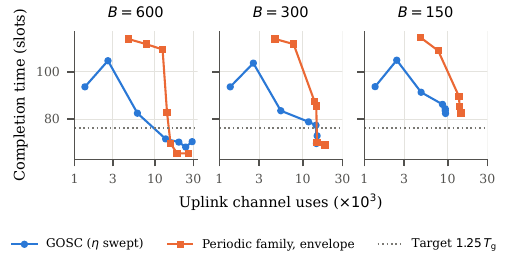}
\caption{Search task with a shared network-wide uplink budget $B$, showing completion time versus
uplink channel uses (including scheduling requests) per mission on the evaluation seeds
(100 missions per point). GOSC sweeps $\eta$; the orange line is the best envelope of the
38 periodic configurations with max-min fair grants; the dotted line is the target $1.25\,T_{\mathrm g}$.}
\label{fig:shared}
\end{figure}

\subsubsection{Downlink impairments}\label{sec:robust}
We also impaired the broadcast of the common knowledge. With a lossy broadcast, GOSC is faster than semantic periodic
updates at loss probabilities 0.3 and 0.5 (13.1 [4.9, 21.8] and 10.2 [2.0, 18.2] slots);
for delays of up to 10 slots, no significant difference was detected, and all schemes,
including the ideal-communication reference, slow down because agents plan on stale
common knowledge. Ignoring the teammates' private evidence in the value costs no
significant time; an oracle ToM model that knows this evidence was 3.1 [$-4.3$, 10.7] slots
faster in the default setting. Over the team sizes
$K\in\{2,\dots,10\}$ and budgets $W\in\{25,\dots,400\}$, GOSC is faster than semantic periodic
updates for $K=8$ and $K=10$ (6.9 [1.3, 12.9] and 8.7 [3.7, 14.1] slots) and at $W=50$ (17.8
[7.0, 28.5]), and slower for $K=2$ (24.9 slots) and for $W\ge200$ (7--8 slots); with two
agents or abundant resources, reporting everything is cheap and helps.

\subsection{Proof of Concept with Heterogeneous LLM--Planner Teams}\label{sec:llm}
This experiment is a proof of concept, not a study of agentic reasoning; it tests whether
GOSC's communication layer works unchanged when the cognitive core of some agents is an
LLM. Agents 0, 2 and 4 are driven by Qwen2.5-3B-Instruct, Llama-3.2-3B-Instruct or
Qwen2.5-7B-Instruct (4-bit, served locally). The LLM receives five pre-processed candidate
targets with their survivor mass, distance and distance to the nearest teammate, plus the
teammates' announced targets, and returns a label constrained by a JSON schema; this constrained choice is far from
open-ended agentic reasoning. Responses are cached, so every mission is reproducible. The
answers were always valid, but the LLMs chose the planner's candidate in only
\LLMAgreeLo--\LLMAgreeHi\% of the deliberations, and the LLM teams were weaker searchers
than the planner team even with ideal communication (\LLMGenieGap{} slots slower with
Qwen2.5-3B).

Table~\ref{tab:llm} compares GOSC with semantic periodic updates in the search task. Two
results must be distinguished. \emph{Resource use} transfers unchanged across cognitive
cores; GOSC used \LLMRatioRange{} times fewer channel uses with every model, and
3.0 times fewer in the planner team. \emph{Completion time} depends on the team and is
less certain. With Qwen2.5-3B (50 seeds), GOSC was faster by \LLMdTQwen{} slots. With
Llama-3.2-3B, the difference was not significant on the 30 seeds of our first run (17.3
[$-0.2$, 35.6] slots); on 50 seeds it was \LLMdTLlama{} slots, significant but with a
lower bound close to zero. \QSevenText{} In the planner team, no significant difference was
detected (\LLMdTPlanner{} slots). Where the difference is significant, the periodic
scheme degrades more than GOSC when LLM agents join (per-seed interaction \LLMInterQwen{}
slots with Qwen2.5-3B). In these teams, too, no significant difference between the ToM and
throughput values was detected. We therefore claim only that the communication layer transfers
across cognitive cores while preserving its resource savings; larger models and
richer agentic tasks are needed to establish completion-time effects.

\begin{table}[t]
\centering
\caption{Proof of concept with heterogeneous teams, search task (agents 0, 2, 4
LLM-driven; missions in parentheses). Completion times are means $\pm$ 95\% CI; channel uses
($\times10^3$) are means; $\Delta T$ is the paired difference periodic minus GOSC [95\%
bootstrap CI], positive when GOSC is faster; the ratio is channel uses periodic over GOSC}
\label{tab:llm}
\setlength{\tabcolsep}{3pt}
\resizebox{\columnwidth}{!}{%
\begin{tabular}{@{}llrrlr@{}}
\toprule
Team & Scheme & Completion & Ch.\ uses & $\Delta T$ (slots) & Ratio\\
\midrule
\csname @@input\endcsname figures/table_llm_r2.tex
\bottomrule
\end{tabular}}
\end{table}

\subsection{Limitations}\label{sec:limits}
The results cover two grid tasks in one simulator with a generic channel model; the
efficiency claims are relative to the periodic policies we tuned and to targets set from
an ideal-communication reference obtained with the same heuristic planner. We did not
reimplement the published systems of \cite{wang2022tom2c,thomas2023pragmatic,wu2024waypoint},
which are trained or not publicly available, and evaluated their mechanisms instead. The
ablations change resource use as well as outcome, so they do not isolate joint message
and rate selection from an equally tuned sequential scheduler at matched cost. The LLM
experiment is a proof of concept with small models and a constrained decision
(Section~\ref{sec:llm}).

\section{Conclusion}\label{sec:conclusion}
We proposed GOSC, a closed-loop co-design of what an agent sends, when it sends it, and
how reliably it is delivered, with the broadcast of common knowledge closing the loop. We
proved that a message is sent only if its value exceeds the price of its delivery, that
more valuable messages receive more robust rates, and that the Lagrangian scheduler is
exact whenever the price keeps the budget slack, which held in 95--96\% of the logged
instances. At operating points frozen and re-evaluated on fresh seeds, GOSC met common
mission targets with 1.2--8.5 times fewer uplink channel uses than tuned
semantic periodic schedules that filter or suppress updates, and it saved significantly
more survivors in two of three rescue settings. The advantage survived 16--64-bit packet
overheads in five of six settings, held for the total uplink-plus-downlink cost, and
persisted in the rescue task when all agents shared one uplink. Precise semantic
valuation was not needed; rough value estimates preserved the advantage in most settings,
whereas content-independent values lost it in the search task. With three LLMs, the
resource savings carried over, while completion-time gains depended on the model. Tasks
in which individual messages have concentrated, predictable consequences, richer agentic
reasoning with larger LLMs, and multi-cell radio resource management are natural next
steps.

\appendix[Implementation Details]\label{app}
\emph{Rescue policy and claim value.} For a known task $s$ with deadline $d_s$ and service
progress $\pi_s$, agent $k$ at $p_k$ computes the slack $d_s-t-\|p_k-s\|_1-(S_r-\pi_s)$. It
keeps its rescue target while the slack is non-negative and no higher-priority teammate
has announced the same target; otherwise it takes the feasible task with the smallest
slack that no teammate is known to claim and for which no free teammate at a known
position is closer, and if none exists it searches with \eqref{eq:score}. Announcing a
rescue target $s$ has value $v_C=5$ if, applying this policy to the shared knowledge,
some teammate would choose $s$, and $0.05$ otherwise.

\emph{Utility in \eqref{eq:voi}.} $U_j(c)=M^{(k)}(c)\,(1+\beta\|c-p_j\|_1)^{-1}
\prod_{q\in\mathcal{D}_j^{(k)}}\rho(c,q)$, where $M^{(k)}$ is the footprint mass under the
sender's belief $\ell^E+\delta_k$ (excluding declared cells and the sender's own confirmed
survivors) and $\mathcal{D}_j^{(k)}$ is teammate $j$'s deconfliction set with the sender's
point replaced by its intention $g_k$. For evidence, $\mathcal{K}^E\oplus m$ adds the
records' LLRs to $\ell^E$ and updates the sender's known position; for an intent, it
replaces the sender's point in the deconfliction sets of lower-priority teammates.

\emph{Relevance filter and suppression.} A record $o$ is retained if
$\sum_{c\in\mathcal{F}(o)\setminus\mathcal{X}}|\sigma(\ell^E+q_k+\lambda(o))(c)-\sigma(\ell^E+q_k)(c)|\ge\varepsilon$,
where $q_k$ is the LLR of the records already queued and $\mathcal{X}$ the declared cells;
GOSC uses $\varepsilon=0.02$, and the suppressing periodic baseline the swept $\theta$.

\emph{Finite blocklength.} An $n$-symbol packet at rate $R$ and SNR $\gamma$ fails with
$\epsilon=Q\big((n[C(\gamma)-R]+\tfrac12\log_2 n)/\sqrt{nV(\gamma)}\big)$,
$C=\log_2(1+\gamma)$, $V=(1-(1+\gamma)^{-2})\log_2^2e$; a coding gap $\Delta$ scales $\gamma$ by
$10^{-\Delta/10}$, and the transmitter uses the fading average $\E_h[1-\epsilon]$.

\emph{Overhead and cost accounting.} Every GOSC packet and every per-slot transport block
of a periodic scheme (which multiplexes its queue) carries $H$ extra bits, so GOSC pays one
overhead per message and the periodic schemes one per slot, which favours the latter; an
option whose packet and overhead do not fit into the slot is unavailable at that rate,
and the periodic schemes use the goodput-optimal rate for $(WR-H)^+$ payload bits. Every
non-empty broadcast carries $H$ as well.

\emph{Imperfect values.} For a log-normal error, the value of each intent and evidence
message is multiplied by $e^{\sigma z}$, $z\sim\mathcal{N}(0,1)$, drawn per message and slot
and common to its options (the setting of Proposition~\ref{prop:robust}); the
order-of-magnitude estimate rounds each value to the nearest power of ten; the
content-independent value replaces the magnitude by that of a randomly drawn earlier
message of the same type in the mission (keeping its distribution but not its
information). Confirmations keep their fixed value; $\rho$ is the Spearman correlation
between estimated and true values, pooled over missions.

\emph{Frozen operating points.} Table~\ref{tab:frozen} lists the operating points behind
Table~\ref{tab:val}; all other parameters are as in Table~\ref{tab:params}.

\begin{table}[h]
\centering
\caption{Frozen operating points of the fresh-seed validation (Table~\ref{tab:val})}
\label{tab:frozen}
{\footnotesize
\begin{tabular}{@{}lll@{}}
\toprule
Setting & GOSC price & Periodic configuration\\
\midrule
\csname @@input\endcsname figures/table_frozen.tex
\bottomrule
\end{tabular}}
\end{table}

\bibliographystyle{IEEEtran}
\bibliography{refs}

@book{shannon1949,
  author    = {C. E. Shannon and W. Weaver},
  title     = {The Mathematical Theory of Communication},
  publisher = {University of Illinois Press},
  address   = {Urbana, IL},
  year      = {1949}
}

@article{gunduz2023beyond,
  author  = {D. G{\"u}nd{\"u}z and Z. Qin and I. E. Aguerri and H. S. Dhillon and Z. Yang and A. Yener and K. K. Wong and C.-B. Chae},
  title   = {Beyond Transmitting Bits: Context, Semantics, and Task-Oriented Communications},
  journal = {IEEE J. Sel. Areas Commun.},
  volume  = {41},
  number  = {1},
  pages   = {5--41},
  year    = {2023}
}

@article{strinati2021beyond,
  author  = {E. Calvanese Strinati and S. Barbarossa},
  title   = {{6G} Networks: Beyond {S}hannon towards Semantic and Goal-Oriented Communications},
  journal = {Comput. Netw.},
  volume  = {190},
  pages   = {107930},
  year    = {2021}
}

@article{kountouris2021semantics,
  author  = {M. Kountouris and N. Pappas},
  title   = {Semantics-Empowered Communication for Networked Intelligent Systems},
  journal = {IEEE Commun. Mag.},
  volume  = {59},
  number  = {6},
  pages   = {96--102},
  year    = {2021}
}

@article{chaccour2025less,
  author  = {C. Chaccour and W. Saad and M. Debbah and Z. Han and H. V. Poor},
  title   = {Less Data, More Knowledge: Building Next-Generation Semantic Communication Networks},
  journal = {IEEE Commun. Surveys Tuts.},
  volume  = {27},
  number  = {1},
  year    = {2025}
}

@article{bourtsoulatze2019deep,
  author  = {E. Bourtsoulatze and D. {Burth Kurka} and D. G{\"u}nd{\"u}z},
  title   = {Deep Joint Source-Channel Coding for Wireless Image Transmission},
  journal = {IEEE Trans. Cogn. Commun. Netw.},
  volume  = {5},
  number  = {3},
  pages   = {567--579},
  year    = {2019}
}

@article{xie2021deepsc,
  author  = {H. Xie and Z. Qin and G. Y. Li and B.-H. Juang},
  title   = {Deep Learning Enabled Semantic Communication Systems},
  journal = {IEEE Trans. Signal Process.},
  volume  = {69},
  pages   = {2663--2675},
  year    = {2021}
}

@article{xie2022taskmulti,
  author  = {H. Xie and Z. Qin and X. Tao and K. B. Letaief},
  title   = {Task-Oriented Multi-User Semantic Communications},
  journal = {IEEE J. Sel. Areas Commun.},
  volume  = {40},
  number  = {9},
  pages   = {2584--2597},
  year    = {2022}
}

@article{shao2022ib,
  author  = {J. Shao and Y. Mao and J. Zhang},
  title   = {Learning Task-Oriented Communication for Edge Inference: An Information Bottleneck Approach},
  journal = {IEEE J. Sel. Areas Commun.},
  volume  = {40},
  number  = {1},
  pages   = {197--211},
  year    = {2022}
}

@article{tung2021effective,
  author  = {T.-Y. Tung and S. Kobus and J. P. Roig and D. G{\"u}nd{\"u}z},
  title   = {Effective Communications: A Joint Learning and Communication Framework for Multi-Agent Reinforcement Learning Over Noisy Channels},
  journal = {IEEE J. Sel. Areas Commun.},
  volume  = {39},
  number  = {8},
  pages   = {2590--2603},
  year    = {2021}
}

@article{uysal2022semantic,
  author  = {E. Uysal and O. Kaya and A. Ephremides and J. Gross and M. Codreanu and P. Popovski and M. Assaad and G. Liva and A. Munari and B. Soret and T. Soleymani and K. H. Johansson},
  title   = {Semantic Communications in Networked Systems: A Data Significance Perspective},
  journal = {IEEE Netw.},
  volume  = {36},
  number  = {4},
  pages   = {233--240},
  year    = {2022}
}

@article{saad2020vision,
  author  = {W. Saad and M. Bennis and M. Chen},
  title   = {A Vision of {6G} Wireless Systems: Applications, Trends, Technologies, and Open Research Problems},
  journal = {IEEE Netw.},
  volume  = {34},
  number  = {3},
  pages   = {134--142},
  year    = {2020}
}

@article{letaief2019roadmap,
  author  = {K. B. Letaief and W. Chen and Y. Shi and J. Zhang and Y.-J. A. Zhang},
  title   = {The Roadmap to {6G}: {AI} Empowered Wireless Networks},
  journal = {IEEE Commun. Mag.},
  volume  = {57},
  number  = {8},
  pages   = {84--90},
  year    = {2019}
}

@inproceedings{foerster2016learning,
  author    = {J. Foerster and I. A. Assael and N. de Freitas and S. Whiteson},
  title     = {Learning to Communicate with Deep Multi-Agent Reinforcement Learning},
  booktitle = {Proc. Adv. Neural Inf. Process. Syst. (NeurIPS)},
  pages     = {2137--2145},
  year      = {2016}
}

@inproceedings{sukhbaatar2016commnet,
  author    = {S. Sukhbaatar and A. Szlam and R. Fergus},
  title     = {Learning Multiagent Communication with Backpropagation},
  booktitle = {Proc. Adv. Neural Inf. Process. Syst. (NeurIPS)},
  pages     = {2244--2252},
  year      = {2016}
}

@inproceedings{das2019tarmac,
  author    = {A. Das and T. Gervet and J. Romoff and D. Batra and D. Parikh and M. Rabbat and J. Pineau},
  title     = {{TarMAC}: Targeted Multi-Agent Communication},
  booktitle = {Proc. Int. Conf. Mach. Learn. (ICML)},
  pages     = {1538--1546},
  year      = {2019}
}

@inproceedings{kim2019schednet,
  author    = {D. Kim and S. Moon and D. Hostallero and W. J. Kang and T. Lee and K. Son and Y. Yi},
  title     = {Learning to Schedule Communication in Multi-Agent Reinforcement Learning},
  booktitle = {Proc. Int. Conf. Learn. Represent. (ICLR)},
  year      = {2019}
}

@article{bernstein2002complexity,
  author  = {D. S. Bernstein and R. Givan and N. Immerman and S. Zilberstein},
  title   = {The Complexity of Decentralized Control of {M}arkov Decision Processes},
  journal = {Math. Oper. Res.},
  volume  = {27},
  number  = {4},
  pages   = {819--840},
  year    = {2002}
}

@inproceedings{kaul2012aoi,
  author    = {S. Kaul and R. Yates and M. Gruteser},
  title     = {Real-Time Status: How Often Should One Update?},
  booktitle = {Proc. IEEE INFOCOM},
  pages     = {2731--2735},
  year      = {2012}
}

@article{maatouk2020aoii,
  author  = {A. Maatouk and S. Kriouile and M. Assaad and A. Ephremides},
  title   = {The Age of Incorrect Information: A New Performance Metric for Status Updates},
  journal = {IEEE/ACM Trans. Netw.},
  volume  = {28},
  number  = {5},
  pages   = {2215--2228},
  year    = {2020}
}

@inproceedings{rao1995bdi,
  author    = {A. S. Rao and M. P. Georgeff},
  title     = {{BDI} Agents: From Theory to Practice},
  booktitle = {Proc. 1st Int. Conf. Multi-Agent Syst. (ICMAS)},
  pages     = {312--319},
  year      = {1995}
}

@article{baker2017tom,
  author  = {C. L. Baker and J. Jara-Ettinger and R. Saxe and J. B. Tenenbaum},
  title   = {Rational Quantitative Attribution of Beliefs, Desires and Percepts in Human Mentalizing},
  journal = {Nat. Hum. Behav.},
  volume  = {1},
  pages   = {0064},
  year    = {2017}
}

@inproceedings{park2023generative,
  author    = {J. S. Park and J. O'Brien and C. J. Cai and M. R. Morris and P. Liang and M. S. Bernstein},
  title     = {Generative Agents: Interactive Simulacra of Human Behavior},
  booktitle = {Proc. ACM Symp. User Interface Softw. Technol. (UIST)},
  year      = {2023}
}

@inproceedings{li2023camel,
  author    = {G. Li and H. A. A. K. Hammoud and H. Itani and D. Khizbullin and B. Ghanem},
  title     = {{CAMEL}: Communicative Agents for ``Mind'' Exploration of Large Language Model Society},
  booktitle = {Proc. Adv. Neural Inf. Process. Syst. (NeurIPS)},
  year      = {2023}
}

@article{wu2023autogen,
  author  = {Q. Wu and G. Bansal and J. Zhang and Y. Wu and B. Li and E. Zhu and L. Jiang and X. Zhang and S. Zhang and J. Liu and A. H. Awadallah and R. W. White and D. Burger and C. Wang},
  title   = {{AutoGen}: Enabling Next-Gen {LLM} Applications via Multi-Agent Conversation},
  journal = {arXiv preprint arXiv:2308.08155},
  year    = {2023}
}

@inproceedings{hong2024metagpt,
  author    = {S. Hong and M. Zhuge and J. Chen and X. Zheng and Y. Cheng and C. Zhang and J. Wang and Z. Wang and S. K. S. Yau and Z. Lin and L. Zhou and C. Ran and L. Xiao and C. Wu and J. Schmidhuber},
  title     = {{MetaGPT}: Meta Programming for a Multi-Agent Collaborative Framework},
  booktitle = {Proc. Int. Conf. Learn. Represent. (ICLR)},
  year      = {2024}
}

@inproceedings{zhang2024coela,
  author    = {H. Zhang and W. Du and J. Shan and Q. Zhou and Y. Du and J. B. Tenenbaum and T. Shu and C. Gan},
  title     = {Building Cooperative Embodied Agents Modularly with Large Language Models},
  booktitle = {Proc. Int. Conf. Learn. Represent. (ICLR)},
  year      = {2024}
}

@book{tse2005fundamentals,
  author    = {D. Tse and P. Viswanath},
  title     = {Fundamentals of Wireless Communication},
  publisher = {Cambridge University Press},
  year      = {2005}
}

@book{boyd2004convex,
  author    = {S. Boyd and L. Vandenberghe},
  title     = {Convex Optimization},
  publisher = {Cambridge University Press},
  year      = {2004}
}

@article{premack1978tom,
  author  = {D. Premack and G. Woodruff},
  title   = {Does the Chimpanzee Have a Theory of Mind?},
  journal = {Behav. Brain Sci.},
  volume  = {1},
  number  = {4},
  pages   = {515--526},
  year    = {1978}
}

@inproceedings{wang2022tom2c,
  author    = {Y. Wang and F. Zhong and J. Xu and Y. Wang},
  title     = {{ToM2C}: Target-Oriented Multi-Agent Communication and Cooperation with Theory of Mind},
  booktitle = {Proc. Int. Conf. Learn. Represent. (ICLR)},
  year      = {2022}
}

@article{thomas2023pragmatic,
  author  = {C. K. Thomas and E. Calvanese Strinati and W. Saad},
  title   = {Reasoning with the Theory of Mind for Pragmatic Semantic Communication},
  journal = {arXiv preprint arXiv:2311.18224},
  year    = {2023}
}

@misc{3gpp38211,
  author       = {{3GPP}},
  title        = {{NR}; Physical Channels and Modulation},
  howpublished = {3rd Generation Partnership Project (3GPP), Technical Specification (TS) 38.211}
}

@article{polyanskiy2010channel,
  author  = {Y. Polyanskiy and H. V. Poor and S. Verd{\'u}},
  title   = {Channel Coding Rate in the Finite Blocklength Regime},
  journal = {IEEE Trans. Inf. Theory},
  volume  = {56},
  number  = {5},
  pages   = {2307--2359},
  year    = {2010}
}

@article{durisi2016toward,
  author  = {G. Durisi and T. Koch and P. Popovski},
  title   = {Toward Massive, Ultrareliable, and Low-Latency Wireless Communication with Short Packets},
  journal = {Proc. IEEE},
  volume  = {104},
  number  = {9},
  pages   = {1711--1726},
  year    = {2016}
}

@book{kellerer2004knapsack,
  author    = {H. Kellerer and U. Pferschy and D. Pisinger},
  title     = {Knapsack Problems},
  publisher = {Springer},
  year      = {2004}
}

@article{kadota2018scheduling,
  author  = {I. Kadota and A. Sinha and E. Uysal-Biyikoglu and R. Singh and E. Modiano},
  title   = {Scheduling Policies for Minimizing Age of Information in Broadcast Wireless Networks},
  journal = {IEEE/ACM Trans. Netw.},
  volume  = {26},
  number  = {6},
  pages   = {2637--2650},
  year    = {2018}
}

@article{holm2023goal,
  author  = {J. Holm and F. Chiariotti and A. E. Kal{\o}r and B. Soret and T. B. Pedersen and P. Popovski},
  title   = {Goal-Oriented Scheduling in Sensor Networks with Application Timing Awareness},
  journal = {IEEE Trans. Commun.},
  year    = {2023}
}

@article{chiariotti2022qaoi,
  author  = {F. Chiariotti and J. Holm and A. E. Kal{\o}r and B. Soret and S. K. Jensen and T. B. Pedersen and P. Popovski},
  title   = {Query Age of Information: Freshness in Pull-Based Communication},
  journal = {IEEE Trans. Commun.},
  volume  = {70},
  number  = {3},
  year    = {2022}
}

@article{wu2024waypoint,
  author  = {W. Wu and Y. Yang and Y. Deng and A. H. Aghvami},
  title   = {Goal-Oriented Semantic Communications for Robotic Waypoint Transmission: The Value and Age of Information Approach},
  journal = {IEEE Trans. Wireless Commun.},
  year    = {2024}
}

@article{soleymani2022voi,
  author  = {T. Soleymani and J. S. Baras and S. Hirche},
  title   = {Value of Information in Feedback Control: Quantification},
  journal = {IEEE Trans. Autom. Control},
  volume  = {67},
  number  = {7},
  pages   = {3730--3737},
  year    = {2022}
}

@article{chou2006rd,
  author  = {P. A. Chou and Z. Miao},
  title   = {Rate-Distortion Optimized Streaming of Packetized Media},
  journal = {IEEE Trans. Multimedia},
  volume  = {8},
  number  = {2},
  pages   = {390--404},
  year    = {2006}
}

@article{sun2025importance,
  author  = {Z. Sun and S. Ma and S. Li},
  title   = {Task-Oriented Semantic Communication With Importance-Aware Rate Control},
  journal = {IEEE Commun. Lett.},
  volume  = {29},
  number  = {7},
  pages   = {1520--1524},
  year    = {2025},
  doi     = {10.1109/LCOMM.2025.3566092}
}

@article{cao2025mimo,
  author  = {Y. Cao and Y. Wu and L. Lian and M. Tao},
  title   = {Importance-Aware Resource Allocations for {MIMO} Semantic Communication},
  journal = {Entropy},
  volume  = {27},
  number  = {6},
  pages   = {605},
  year    = {2025},
  doi     = {10.3390/e27060605}
}

@article{kim2025uep,
  author  = {S. Kim and Y. Oh and Y. Kim and N. Lee and Y.-S. Jeon},
  title   = {Unequal Error Protection for Digital Semantic Communication with Channel Coding},
  journal = {arXiv preprint arXiv:2508.03381},
  year    = {2025}
}

@article{lei2026icotasc,
  author  = {K. Lei and Y. Peng and L. Zhang and J. Xu},
  title   = {Importance-aware Resource Allocation for Collaborative Task-Oriented Semantic Communication},
  journal = {arXiv preprint arXiv:2606.29052},
  year    = {2026}
}

@article{everett1963generalized,
  author  = {H. Everett},
  title   = {Generalized {L}agrange Multiplier Method for Solving Problems of Optimum Allocation of Resources},
  journal = {Oper. Res.},
  volume  = {11},
  number  = {3},
  pages   = {399--417},
  year    = {1963}
}

@book{topkis1998supermodularity,
  author    = {D. M. Topkis},
  title     = {Supermodularity and Complementarity},
  publisher = {Princeton Univ. Press},
  year      = {1998}
}

@inproceedings{astrom2002riemann,
  author    = {K. J. {\AA}str{\"o}m and B. Bernhardsson},
  title     = {Comparison of {R}iemann and {L}ebesgue Sampling for First Order Stochastic Systems},
  booktitle = {Proc. IEEE Conf. Decis. Control (CDC)},
  pages     = {2011--2016},
  year      = {2002}
}

@article{kelly1998rate,
  author  = {F. P. Kelly and A. K. Maulloo and D. K. H. Tan},
  title   = {Rate Control for Communication Networks: Shadow Prices, Proportional Fairness and Stability},
  journal = {J. Oper. Res. Soc.},
  volume  = {49},
  number  = {3},
  pages   = {237--252},
  year    = {1998}
}

@misc{3gpp38212,
  author       = {{3GPP}},
  title        = {{NR}; Multiplexing and Channel Coding},
  howpublished = {3rd Generation Partnership Project (3GPP), Technical Specification (TS) 38.212}
}

@misc{3gpp38321,
  author       = {{3GPP}},
  title        = {{NR}; Medium Access Control ({MAC}) Protocol Specification},
  howpublished = {3rd Generation Partnership Project (3GPP), Technical Specification (TS) 38.321}
}

@inproceedings{rezazadeh2024genonet,
  author    = {Farhad Rezazadeh and Amir Ashtari Gargari and Sandra Lag{\'e}n
               and Josep Mangues-Bafalluy and Dusit Niyato and Lingjia Liu},
  title     = {{GenOnet}: Generative Open {xG} Network Simulation with
               Multi-Agent {LLM} and {ns-3}},
  booktitle = {2024 3rd International Conference on 6G Networking (6GNet)},
  publisher = {IEEE},
  year      = {2024},
  doi       = {10.1109/6GNet63182.2024.10765766}
}

\end{document}